\newtheorem{thm}{Theorem}[section]
\newtheorem{lem}[thm]{Lemma}
\newtheorem{cor}[thm]{Corollary}
\newtheorem{pro}[thm]{Proposition}
\newtheorem{ex}[thm]{Example}
\newtheorem{rmk}[thm]{Remark}
\newtheorem{defi}[thm]{Definition}
\newcommand{\be }{\begin{equation}}
\newcommand{\ee }{\end{equation}}
\newcommand{\pf}{\noindent{\bf Proof.}\ }
\newcommand{\huaA}{\mathcal{A}}
\newcommand{\huaV}{\mathcal{V}}
\newcommand{\frkg}{\mathfrak g}
\newcommand{\frkh}{\mathfrak h}
\newcommand{\frkk}{\mathfrak k}
\newcommand{\frkX}{\mathfrak X}
\def\qed{\hfill ~\vrule height6pt width6pt depth0pt}
\newcommand{\half}{\frac{1}{2}}
\newcommand{\Id}{\rm{Id}}
\newcommand{\dM}{\mathrm{d}}
\newcommand{\Hom}{\mathrm{Hom}}
\newcommand{\Der}{\mathrm{Der}}
\newcommand{\Aut}{\mathrm{Aut}}
\newcommand{\gl}{\mathfrak {gl}}
\newcommand{\End}{\mathrm{End}}
\newcommand{\ad}{\mathrm{ad}}
\newcommand{\inv}{\mathrm{inv}}
\newcommand{\ve}{\mathrm{v}}
\newcommand{\h}{\mathrm{h}}
\newcommand{\sgn}{\mathrm{sgn}}
\newcommand{\Ksgn}{\mathrm{Ksgn}}
\newcommand{\pat}{\partial_t}
\newcommand{\pau}{\partial_u}
\newcommand{\pae}{\partial_s}
\newcommand{\dc}{\Delta c}
\newcommand{\db}{\Delta b}
\newcommand{\tb}{\widetilde{b}}
\newcommand{\tc}{\widetilde{c}}
\newcommand {\emptycomment}[1]{{}}
\newcommand{\V}{\mathbb{V}}
\begin{document}
\title{
{Integration of Lie 2-algebras and their morphisms
\thanks
 {
The first author is supported by NSFC  (11101179) and SRFDP
(20100061120096).  The second author is supported by the German
Research Foundation (Deutsche Forschungsgemeinschaft (DFG)) through
the Institutional Strategy of the University of G\"ottingen.
 }
} }
\author{Yunhe Sheng  \\
Department of Mathematics, Jilin University,
 Changchun 130012, Jilin, China
\\\vspace{3mm}
email: shengyh@jlu.edu.cn\\
Chenchang Zhu\\
Courant Research Center ``Higher Order Structures'', University of
G$\ddot{\rm{o}}$ttingen\\
email:zhu@uni-math.gwdg.de}
\date{}
\footnotetext{{\it{Keyword}:  $L_\infty$-algebras,
$L_\infty$-morphisms, crossed modules, Lie 2-groups, integration}}

\footnotetext{{\it{MSC}}: Primary 17B55. Secondary 18B40, 18D10.}

\maketitle
\begin{abstract}
Given a strict Lie 2-algebra, we can integrate it
to a strict Lie 2-group by integrating the corresponding Lie algebra crossed
module. On the other hand, the integration procedure of Getzler and
Henriques will also produce a 2-group.  In this paper, we show that
these two integration results are  Morita equivalent. As an
application, we integrate a non-strict morphism between Lie algebra
crossed modules  to a
generalized morphism between their corresponding Lie group crossed modules.
\end{abstract}

\section{Introduction}
Recently people have paid much attention to the integration of
Lie-algebra-like structures, such as that of Lie algebroids
\cite{brahic extension,crainic,tz2}, of $L_\infty$-algebras
\cite{Getzler, henriques} and of Courant algebroids
\cite{LS,MT,shengzhu3}. Here ``integration'' is meant in the same
sense in which a Lie algebra is integrated to a corresponding Lie
group.

For an $L_\infty$-algebra $\frkh$, there is an infinite dimensional
Kan simplicial space $\int \frkh$ constructed in \cite{Getzler,
  henriques}, whose $k$-cells are given by $L_\infty$-algebroid
morphisms $T\Delta^k \to \frkh$. Applying to a strict Lie 2-algebra
$\frkh$, the 2-truncation $\tau_2(\int \frkh)$ is a 2-group which is
believed to be the universal integration of $\frkh$.

On the other hand, a strict Lie 2-algebra (resp. strict Lie 2-group)
one-to-one  corresponds to a Lie algebra (resp. Lie group) crossed
module. Thus, a strict Lie 2-algebra can be easily integrated to a
strict Lie 2-group by integrating its corresponding crossed module.
In this article, we prove that these two integration results are
Morita equivalent. Already noticed in \cite{baez  Lie 2 algebra},
the classifying Postnikov data of Lie 2-algebras is the quotient Lie
algebra in degree 0, a choice of Lie module and of a corresponding
3-cocycle. When the Lie module structure is trivial, as in the case
of string Lie 2-algebra, the above Morita equivalence may be implied
via a homotopy theoretical method (see \cite[Sect.
4.1.3]{ScheiberStasheff}). Our article further provides an explicit
Morita morphism generally regardless the triviality of the Lie
module. We must warn the readers that we treat finite dimensional
case only because we need to use the fact that the second homotopy
group of a Lie group is trivial, and this is true only in the finite
dimensional case.

As $L_\infty$-algebras and their integration play an important role
in higher gauge theory \cite{Baezgauge} and higher Chern-Weil theory
\cite{ScheiberStasheff}, we believe our explicit construction will
have potential application in mathematical physics.

As an application, we use the above result to integrate a nonstrict
morphism between strict Lie 2-algebras to a generalized morphism
between their strict Lie 2-groups. We must mention that an
integration of such morphisms is also provided via the technique of
butterflies in \cite{noohi:morphism}. Here we also provide some
mathematical physics oriented examples of such morphisms: they can
encode 2-term $L_\infty$-modules of $\frkg$ in the sense of
\cite{lada-markl}, or equivalently, 2-term representation up to
homotopy in the sense of \cite{Camilo rep upto homotopy},
non-abelian extensions of $\frkg$, and up to homotopy Poisson
actions of $\frkg$ in the sense of \cite{Severa}. Further
application of the integration  is provided in \cite{shengzhu2}.

{\bf Acknowledgement:} We thank very much the referee for very
helpful comments. Both authors give their  warmest thanks to Courant
Research Center ``Higher Order Structures'', G\"{o}ttingen
University, and  Jilin University, where this work was done during
their visits.

\emptycomment{ Now we turn our attention to apply the integration
method in \cite{crainic, Getzler, henriques, tz2} (path-method) to
morphisms between such generalized Lie structures. We must mention
that an integration of such morphisms is also provided via the
technique of butterflies in \cite{noohi:morphism}.

In this paper, we use the path method to integrate a nonstrict
morphism from a Lie algebra $\frkg$ to a strict Lie 2-algebra
$\frkh_\bullet$. Such morphisms can encode 2-term $L_\infty$-modules
of $\frkg$ in the sense of \cite{lada-markl}, or equivalently,
2-term representation up to homotopy in the sense of \cite{Camilo
rep upto homotopy}, non-abelian extensions of $\frkg$, and up to
homotopy Poisson actions of $\frkg$ in the sense of \cite{Severa}.
Further application of the integration in the first case is provided
in \cite{shengzhu2}.

The spirit of the path method is that a path $g(t)$ in a Lie group
$G$ corresponds to a path $\dot{g}(t) g(t)^{-1}$ in its Lie algebra
$\frkg$. Similarly, the homotopy of paths in $G$ can be translated
to a certain differential equation \eqref{eq:g-homotopy} in $\frkg$,
which is called $\frkg$-homotopy (which is not the usual topological
homotopy of paths in $\frkg$). If $G$ is simply connected, $G$,
being the quotient of the space of $G$-paths by $G$-homotopies, is
totally encoded by the data on $\frkg$. That is, $G$ is the quotient
of the space of $\frkg$-paths by $\frkg$-homotopies. Thus we have
integrated $\frkg$ to $G$. Furthermore, a Lie algebra morphism
$\frkg \xrightarrow{\varphi} \frkh$ can be simply integrated by
looking at the induced map on the path spaces. Then the induced map
descends to the quotients, namely the groups, because it preserves
homotopies. However, when we apply the same method to a (non-strict)
morphism $\frkg \xrightarrow{\phi} \frkh_\bullet$ between Lie
2-algebras, the situation becomes much more complicated because the
induced map on the path space does not preserve homotopies any more.
In the case when $\frkh_\bullet$ is a strict Lie 2-algebra, we find
out the differential equation (Eq. \eqref{eqn:delta b}) controlling
the obstruction of the homotopy. Moreover, we prove that the
homotopy class of the solutions of \eqref{eqn:delta b} only depends
on the homotopy classes of the relevant paths, but not on the paths
themselves. Basing on these facts, we finally construct a Lie
2-group morphism from the path 2-group of $\frkg$ to the simply
connected strict Lie 2-group $H_\bullet$ integrating
$\frkh_\bullet$.}

\section{Equivalence of Integrations}
For an $L_\infty$-algebra $\frkh$, there is an infinite dimensional
Kan simplicial space $\int \frkh$ constructed in \cite{Getzler,
  henriques},
\[ \frkh_k := Hom_{d.g.c.a.}( \wedge^\bullet(\frkh),
\Omega^\bullet(\Delta_k)). \] Here we remind the readers that $\wedge^\bullet(\frkh)$ has a natural
differential graded commutative algebra (d.g.c.a.) structure which generalizes
the Chevalley-Eilenberg complex for a Lie algebra. It is shown in this
paper, that when $\frkh$ is a
Lie algebra, the one-truncation $\tau_1(\frkh)$ is exactly the nerve
of the simply connected Lie group $H$ integrating $\frkh$.

When $\frkh$ is a strict Lie 2-algebra corresponding to the crossed
module $\frkh_1\xrightarrow{\dM} \frkh_0$, the two-truncation
$\tau_2(\frkh)$ is a 2-group. On the other hand, there is another
natural Lie 2-group corresponding to the integrated crossed module
of Lie groups (see Def. \ref{defi:xm-gp})  $H_1 \xrightarrow{\dM}
H_0$, where $H_0$ and $H_1$ are simply connected Lie groups of
$\frkh_0$ and $\frkh_1$ respectively (for this integration see for
example \cite[Remark 3.7]{shengzhu2}).  In this section, we show the
isomorphisms between these two 2-groups.

\subsection{Background on $L_\infty$ algebras}
In this section, we briefly review the notions of
$L_\infty$-algebras and crossed modules of Lie algebras. They both
provide models for strict Lie 2-algebras. \emptycomment{ Then we
concentrate on DGLA morphisms from a Lie algebra to a 2-term DGLA.
We will see that several interesting objects can be described by
such a morphism, including 2-term representations up to homotopy of
Lie algebras, non-abelian extensions of Lie algebras and up to
homotopy Poisson actions.}

$L_\infty$-algebras,  sometimes  called strongly homotopy Lie
algebras,  were introduced by Stasheff \cite{stasheff:shla} as a
model for ``Lie algebras that satisfy Jacobi identity up to all
higher homotopies''. The following convention of $L_\infty$-algebras
is the same as Lada and Markl in \cite{lada-markl}.

\begin{defi}
An $L_\infty$-algebra is a graded  vector space $L=L_0\oplus
L_1\oplus\cdots$ equipped with a system $\{l_k|~1\leq k<\infty\}$ of
linear maps $l_k:\wedge^kL\longrightarrow L$ with degree
$\deg(l_k)=k-2$, where the exterior powers are interpreted in the
graded sense and the following relation with Ksozul sign ``Ksgn'' is
satisfied for all $n\geq0$:
\begin{equation}
\sum_{i+j=n+1}(-1)^{i(j-1)}\sum_{\sigma}\sgn(\sigma)\Ksgn(\sigma)l_j(l_i(x_{\sigma(1)},\cdots,x_{\sigma(i)}),x_{\sigma(i+1)},\cdots,x_{\sigma(n)})=0,
\end{equation}
where the summation is taken over all $(i,n-i)$-unshuffles with
$i\geq1$.
\end{defi}


If $L$ is concentrated in degrees $<n$, we obtain the notion of
 {\bf $n$-term $L_\infty$-algebras}.  A semi-strict Lie 2-algebra can
 be understood as a 2-term $L_\infty$-algebra. a {\bf strict Lie 2-algebra} is  a 2-term $L_\infty$-algebra, in which $l_3$ is zero (see \cite{baez  Lie 2
algebra}).

\begin{defi}
A crossed module of Lie algebras is a quadruple
$(\frkh_1,\frkh_0,dt,\phi)$, where $\frkh_1$ and $\frkh_0$ are Lie
algebras,   $dt:\frkh_1\longrightarrow\frkh_0$ is a Lie algebra
morphism and $\phi:\frkh_0\longrightarrow\Der(\frkh_1)$ is an action
of Lie algebra $\frkh_0$ on Lie algebra $\frkh_1$ as a derivation,
such that
$$
dt(\phi_u(m))=[u,dt(m)]_{\frkh_0},\quad
\phi_{dt(m)}(n)=[m,n]_{\frkh_1}.
$$
Here $\Der(\frkh_1)$ is the derivation Lie algebra of $\frkh_1$ with
the commutation Lie bracket $[\cdot,\cdot]_C$.
\end{defi}

The following result is well known.
\begin{thm}\label{thm:dgla and cm}
There is a one-to-one correspondence between strict Lie 2-algebras
and crossed modules of Lie algebras.
\end{thm}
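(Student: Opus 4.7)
The plan is to exhibit two explicit constructions, one in each direction, and verify they are mutually inverse by matching axioms term-by-term.

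In one direction, starting from a strict Lie 2-algebra $L=L_0\oplus L_1$ with brackets $l_1,l_2$ (and $l_3=0$), I would set $\frkh_0:=L_0$ with Lie bracket $[u,v]_{\frkh_0}:=l_2(u,v)$, $\frkh_1:=L_1$, $dt:=l_1$, and $\phi_u(m):=l_2(u,m)$ for $u\in L_0$, $m\in L_1$. The Lie bracket on $\frkh_1$ is forced by the Peiffer relation $[m,n]_{\frkh_1}:=\phi_{dt(m)}(n)=l_2(l_1 m,n)$ (note there is no independent component of $l_2$ on $L_1\wedge L_1$ since $L_2=0$). The crossed module axioms are then read off from the $L_\infty$ identities: Jacobi on $\frkh_0$ comes from the $n=3$ identity with three arguments in $L_0$; the equivariance $dt(\phi_u(m))=[u,dt(m)]_{\frkh_0}$ comes from the $n=2$ identity with arguments in $L_0$ and $L_1$; and the derivation property $\phi_{[u,v]}=[\phi_u,\phi_v]_C$ comes from the $n=3$ identity with two arguments in $L_0$ and one in $L_1$ (this is where $l_3=0$ enters crucially). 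The Peiffer identity holds by the definition of $[\cdot,\cdot]_{\frkh_1}$.

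In the reverse direction, given a crossed module $(\frkh_1,\frkh_0,dt,\phi)$, I would set $L_0:=\frkh_0$, $L_1:=\frkh_1$, $l_1:=dt$, $l_2(u,v):=[u,v]_{\frkh_0}$ on $L_0\wedge L_0$, $l_2(u,m):=\phi_u(m)$ on $L_0\otimes L_1$, $l_2:=0$ on $L_1\wedge L_1$ (the only possibility since $L_2=0$), and $l_3:=0$. Each $L_\infty$ relation then reduces directly to a crossed module axiom or to a triviality: the $n=2$ relation reduces to equivariance of $dt$; the $n=3$ relation, case by case on how many arguments lie in $L_0$ versus $L_1$, reduces respectively to Jacobi for $\frkh_0$, to $\phi$ acting by derivations, and (for two arguments in $L_1$) to an identity that collapses after applying the Peiffer relation. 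That the two constructions are mutually inverse is then immediate, since each one merely reorganizes the same list of maps.

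The main technical nuisance, rather than an obstacle, is that the bracket on $\frkh_1$ is not literally part of the $L_\infty$ data but must be recovered via Peiffer, and one must then confirm it is genuinely a Lie bracket. Antisymmetry follows from $l_1\bigl(l_2(m,n)\bigr)=0$ (since $l_2(m,n)\in L_2=0$), which combined with the $n=2$ $L_\infty$ identity gives $l_2(l_1 m,n)=l_2(m,l_1 n)=-l_2(l_1 n,m)$; Jacobi for $[\cdot,\cdot]_{\frkh_1}$ follows by expanding $[[m,n],p]_{\frkh_1}=l_2([l_1 m,l_1 n]_{\frkh_0},p)$ via equivariance and then applying the derivation property to the three cyclic terms. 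Once these checks are made, Theorem~\ref{thm:dgla and cm} is established.
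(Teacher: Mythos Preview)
Your proposal is correct. Note, however, that the paper does not actually prove Theorem~\ref{thm:dgla and cm}: it is stated as well known, with the precise dictionary deferred to \cite{shengzhu2}. The only explicit content the paper provides is the remark immediately following the theorem, namely that $l_2(m,n)=0$ for $m,n\in L_1$ while $[m,n]_{\frkh_1}=l_2(\dM m,n)$, together with the semidirect-product bracket \eqref{eqn:bracketdirectsum}. Your construction recovers exactly this dictionary, and your identification of the ``technical nuisance''---that the Lie bracket on $\frkh_1$ is not part of the $L_\infty$ data but must be \emph{defined} via Peiffer and then checked to be a Lie bracket---is precisely the point the paper singles out. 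So your argument is not a different route; it simply fills in the details the paper chose to omit. One minor remark: when you check that $\phi:\frkh_0\to\Der(\frkh_1)$, you verify $\phi_{[u,v]}=[\phi_u,\phi_v]_C$ but do not explicitly record that each $\phi_u$ is a derivation of $[\cdot,\cdot]_{\frkh_1}$; this follows from the $n=3$ identity applied to $(u,\dM m,n)$ combined with equivariance, and is worth stating for completeness.
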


For the precise relation between the operation $l_2$ and the Lie
brackets $[\cdot,\cdot]_{\frkh_0}$ and  $[\cdot,\cdot]_{\frkh_1}$,
please see \cite{shengzhu2}. The key difference is that
$l_2(m,n)=0$, for any $m,n\in L_1=\frkh_1$, and
$[m,n]_{\frkh_1}=l_2(\dM m,n)\neq0$. On the direct sum
$\frkh_0\oplus \frkh_1$, there is also a Lie bracket
$[\cdot,\cdot]_{\frkh_0\oplus \frkh_1}$, which is the semidirect
product of the Lie algebra $\frkh_0$ and the Lie algebra $\frkh_1$:
\begin{equation}\label{eqn:bracketdirectsum}
  [u+m,v+n]_{\frkh_0\oplus
  \frkh_1}=l_2(u,v)+l_2(u,n)+l_2(m,v)+[m,n]_{\frkh_1}.
\end{equation}

\begin{ex}\label{ep:derivation}
For any Lie algebra $\frkk$, $(\frkk,\Der(\frkk),\ad,\Id)$ is a
crossed module of Lie algebras. We denote by
$\frkk\stackrel{\ad}{\longrightarrow}\Der(\frkk)$ the corresponding
strict Lie 2-algebra.
\end{ex}

\subsection{Background on 2-groups}

A group is a monoid where every element has an inverse. A 2-group is
a monoidal category where every object has a weak inverse and every
morphism has an inverse. Denote the category of smooth Banach
manifolds and smooth maps by $\rm Diff$, a semistrict Lie 2-group is
a 2-group in $\rm DiffCat$, where  $\rm DiffCat$ is the 2-category
consisting of categories, functors, and natural transformations in
$\rm Diff$. In the sequel, all the Lie 2-groups are semistrict.

\begin{defi}
A semistrict Lie 2-group consists of an object $C$ in $\rm DiffCat$
together with
\begin{itemize}
\item[$\bullet$] a {\bf multiplication} morphism (horizontal
multiplication) $\cdot_\h:C\times C\longrightarrow C$,
\item[$\bullet$] identity object $1$,
\item[$\bullet$]an inverse map $\inv:C\longrightarrow C$
\end{itemize}
together with the following natural isomorphisms::
\begin{itemize}
\item[$\bullet$] the {\bf associator}
$$
a_{x,y,z}:(x\cdot_\h y)\cdot_\h z\longrightarrow x\cdot_\h
(y\cdot_\h z),
$$
\item[$\bullet$]the {\bf left} and {\bf right unit}
$$
l_x:1\cdot_\h x\longrightarrow x,\quad r_x:x\cdot_\h
1\longrightarrow x,
$$
\item[$\bullet$]the  {\bf  unit} and {\bf  counit}
$$
i_x:1\longrightarrow x\cdot_\h \inv(x),\quad e_x:\inv(x)\cdot_\h
x\longrightarrow 1,
$$
\end{itemize}
such that the pentagon identity for the associator, the triangle
identity for the left and right unit, the first and second zig-zag
identities  are satisfied. We refer to
\cite[Definition 7.1]{baez 2 group}.
\end{defi}
As pointed out in \cite[Sect. 7]{baez 2 group}, if the category $C$
carries a semistrict Lie 2-group structure, then $C$ must be a Lie
groupoid. We denote the groupoid multiplication in $C$ by
$\cdot_\ve$ (vertical multiplication).

In the special case when $a_{x,y,z},~l_x,~r_x,~i_x,~e_x$ are all
identity isomorphisms, we obtain the concept of a {\bf strict Lie
2-group}. It is well-known that strict Lie 2-groups can be described
by  crossed modules of Lie groups.

\begin{defi}\label{defi:xm-gp}
A crossed module of Lie groups is a quadruple $(H_1,H_0,t,\Phi)$,
where $H_0$ and $H_1$ are Lie groups, $t:H_1\longrightarrow H_0$ is
a Lie group morphism, and $\Phi:H_0\times H_1\longrightarrow H_1 $
is an action of $H_0$ on $H_1$ as automorphisms of $H_1$ such that
$t$ is $H_0$-equivariant:
\begin{equation}\label{cm g 1}
t\Phi_g(h)=gt(h)g^{-1},\quad \forall ~g\in H_0,~h\in H_1,
\end{equation}
and $t$ satisfies the so called Peiffer identity:
\begin{equation}\label{cm g 2}
\Phi_{t(h)}(h^\prime)=hh^\prime h^{-1},\quad\forall ~h,h^\prime\in
H_1.
\end{equation}
\end{defi}

The following result is well-known, see \cite{baez 2 group,Barker}
for more details.
\begin{thm}There is a one-to-one correspondence between crossed
modules of Lie groups and strict Lie 2-groups.
\end{thm}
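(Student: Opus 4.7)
The proof proceeds by exhibiting explicit constructions in both directions and verifying that they are mutually inverse.

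Starting from a crossed module $(H_1, H_0, t, \Phi)$, I would build a strict Lie 2-group $C$ with object manifold $H_0$ and morphism manifold the semidirect product $H_0 \ltimes_\Phi H_1$. The groupoid structure has source and target
\[
s(g, h) = g, \qquad T(g, h) = t(h)\, g,
\]
unit $(g, 1)$, and vertical composition
\[
(t(h)g,\, h')\cdot_\ve (g, h) = (g,\, h' h).
\]
The horizontal multiplication is taken to be the semidirect product group law,
\[
(g_1, h_1)\cdot_\h (g_2, h_2) = \bigl(g_1 g_2,\ h_1\, \Phi_{g_1}(h_2)\bigr),
\]
and the horizontal inverse is $(g,h)^{-1} = (g^{-1}, \Phi_{g^{-1}}(h^{-1}))$. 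All coherence isomorphisms are identities, so strictness is built in. The nontrivial checks are that $s$ and $T$ are smooth group homomorphisms — the equivariance \eqref{cm g 1} is exactly what is needed to make $T$ multiplicative — and that $\cdot_\h$ is a functor with respect to $\cdot_\ve$, i.e.\ that the interchange law holds; the Peiffer identity \eqref{cm g 2} is exactly what is needed for that.

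Conversely, given a strict Lie 2-group $C$, set $H_0 := \mathrm{Ob}(C)$, a Lie group under $\cdot_\h$, and $H_1 := s^{-1}(1_{H_0})$, the submanifold of arrows whose source is the unit object. The horizontal product restricts to make $H_1$ a Lie group. Take $t: H_1 \to H_0$ to be the target map $T$ and define the action by conjugation with identity 2-arrows,
\[
\Phi_g(h) := 1_g \cdot_\h h \cdot_\h 1_{g^{-1}}.
\]
That $t$ is a Lie group morphism and that $t \Phi_g(h) = g\, t(h)\, g^{-1}$ are immediate from the definition. The Peiffer identity $\Phi_{t(h)}(h') = h h' h^{-1}$ (with the product on the right taken in $H_1$ with its horizontal structure) then drops out of the interchange law after writing each of $h$ and $h'$ as a vertical composite of an identity 2-arrow and itself in two different ways and using the strict unit relations $1_1 \cdot_\h k = k \cdot_\h 1_1 = k$ for $k \in H_1$.

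The main obstacle I expect is establishing the precise equivalence between the interchange law for a strict monoidal Lie groupoid and the conjunction of the Peiffer identity with $H_0$-equivariance of $t$; this is the conceptual heart of the theorem, and is where all the other axioms fall into place. Once that is settled, checking that the two constructions are mutually inverse is straightforward: the abstract 2-group $C$ is recovered from its associated crossed module via $(g, h) \mapsto 1_g \cdot_\h h$, while applying the inverse construction to $H_0 \ltimes_\Phi H_1$ recovers $(H_1, H_0, t, \Phi)$ tautologically. Extending both assignments to morphisms — of crossed modules on one side, strict 2-functors on the other — is a routine verification, as is smoothness of all structure maps.
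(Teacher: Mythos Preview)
Your proposal is correct and in fact more detailed than what the paper offers. The paper does not prove this theorem: it simply states that the result is well known, cites \cite{baez 2 group, Barker}, and then sketches only the forward direction (crossed module to strict Lie 2-group) by writing down the source, target, vertical, and horizontal multiplications on $H_0\ltimes H_1 \Rightarrow H_0$. Your formulas for $s$, $T$, $\cdot_\ve$, and $\cdot_\h$ coincide exactly with the paper's \eqref{m v} and \eqref{m h}. The reverse construction, the identification of the interchange law with equivariance plus the Peiffer identity, and the verification that the two constructions are mutually inverse are all absent from the paper and supplied only by you; these are the standard arguments found in the cited references, and your sketch of them is sound.
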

 Roughly speaking, given a crossed
module $(H_1,H_0,t,\Phi)$ of Lie groups, the corresponding  strict Lie
2-group has $C_0=H_0$ and $C_1= H_0\ltimes H_1$, the
semidirect product of $H_0$ and $H_1$. In this strict Lie 2-group,
the source and target maps $s,~t:C_1\longrightarrow C_0$ are given
by
$$
s(g,h)=g,\quad t(g,h)= t(h)\cdot g,
$$
the vertical multiplication $ \cdot_\ve$ is given by:
\begin{equation}\label{m v}
(g^\prime,h^\prime)\cdot_\ve(g,h) =(g, h^\prime\cdot h),\quad
\mbox{where} \quad g^\prime= t(h)\cdot g,
\end{equation}
the horizontal multiplication $\cdot_\h$ is given by
\begin{equation}\label{m h}
(g,h)\cdot_\h (g^\prime,h^\prime)=(g\cdot g^\prime,h\cdot\Phi_g(
h^\prime)).
\end{equation}

\begin{defi}\label{defi:2 morphism}
Given two Lie 2-groups $C$ and $ C^\prime$, a {\bf unital morphism}
$F:C\longrightarrow C^\prime$ consists of a smooth functor
$(F_0,F_1):C\longrightarrow C^\prime$ equipped with a 2-isomorphism
$$
F_2(x,y):F_0(x)\cdot_\h F_0(y)\longrightarrow F_0(x\cdot_\h y),
$$
such that $F_0(1_C)=1_{C'}$ and the following diagrams commute:
\begin{itemize}
\item[$\bullet$]the compatibility condition of $F_2$ with the associator:
\begin{equation}\label{condition F2}
 \xymatrix{
(F_0(x)\cdot_\h F_0(y))\cdot_\h F_0(z)\ar[r]\ar[d]&F_0(x\cdot_\h y)\cdot_\h F_0(z)\ar[r]&F_0((x\cdot_\h y)\cdot_\h z)\ar[d]\\
F_0(x)\cdot_\h (F_0(y)\cdot_\h F_0(z))\ar[r]&F_0(x)\cdot_\h
F_0(y\cdot_\h z)\ar[r]&F_0(x\cdot_\h (y\cdot_\h z)),}
\end{equation}
\item[$\bullet$]
the compatibility condition of $ F_2$ with the left and right unit:
\begin{equation}
\label{condition F2-lr-unit}\xymatrix{1_{C^\prime}\cdot_\h F_0(x)\ar[r]^{l_{F_0(x)}}\ar[d]^{\Id}& F_0(x) \\
F_0(1_C)\cdot_\h F_0(x)\ar[r]^{F_2(1_C,x)}&F_0(1_C\cdot_\h
x)\ar[u]^{F_1(l_x)}}\quad
\xymatrix{F_0(x)\cdot_\h 1_{C^\prime}\ar[r]^{r_{F_0(x)}}\ar[d]^{\Id}& F_0(x) \\
F_0(x)\cdot_\h F_0(1_C)\ar[r]^{F_2(x,1_C)}&F_0( x\cdot_\h
1_C).\ar[u]^{F_1(r_x)}}
\end{equation}
\end{itemize}
\end{defi}

\emptycomment{

\subsection{Integration (strict cases)}

Given a 2-term DGLA $(\frkh_1\stackrel{\dM}{\longrightarrow}
\frkh_0,l_2)$, the corresponding crossed module of Lie algebras
$(\frkh_1,\frkh_0,dt=\dM,\phi=l_2)$ (see Theorem \ref{thm:dgla and
cm})  easily integrates to a crossed module
of Lie groups $(H_1,H_0,t,\Phi)$, where $H_0$ and $H_1$ are the simply connected Lie
groups integrating $\frkh_0$ and $\frkh_1$ respectively. Therefore, we obtain a strict Lie
2-group  $H_0\ltimes H_1 \Rightarrow H_0$ described in \eqref{m v}
and \eqref{m h} as the
integration of  $(\frkh_1\stackrel{\dM}{\longrightarrow}
\frkh_0,l_2)$. Roughly speaking, we have
 \[\begin{array}{ccccccc}
(\frkh_1,0)& &(\frkh_1,[\cdot,\cdot]_{\frkh_1})& &H_1& & H_0\ltimes H_1\\
\Big\downarrow\vcenter{\rlap{d
}}&\longmapsto&\Big\downarrow\vcenter{\rlap{d
}}&\longmapsto&\Big\downarrow\vcenter{\rlap{$\int \dM$  }}
&\longmapsto&\Big\downarrow\Big\downarrow\vcenter{\rlap{ }}\\
(\frkh_0,[\cdot,\cdot]_{\frkh_0})&
&(\frkh_0,[\cdot,\cdot]_{\frkh_0})& & H_0& & H_0\\
\mbox{DGLA}& &\mbox{crossed module}& &\mbox{crossed module}&
&\mbox{Lie 2-group,}
 \end{array}\]
where $[\cdot, \cdot]_{\frkh_1}$ is defined by
$[m,n]_{\frkh_1}\triangleq l_2(\dM m,n)$.

We can apply the same procedure to a Lie algebra, which is
considered to be a special 2-term DGLA. Furthermore, given  a strict
DGLA morphism from a Lie algebra $\frkg$ to a 2-term DGLA
$\frkh_1\longrightarrow\frkh_0$, which is simply a Lie algebra
morphism $\mu:\frkg\longrightarrow \frkh_0$, we have  a
corresponding  morphism of Lie 2-groups:
\begin{equation}\label{ta:G H0}\begin{array}{ccc}
G&\stackrel{\int \mu}{\longrightarrow} &H_0\ltimes H_1\\
\Big\downarrow\Big\downarrow\vcenter{\rlap{ }}&
&\Big\downarrow\Big\downarrow\vcenter{\rlap{
}}\\
G&\stackrel{\int \mu}{\longrightarrow}  &H_0,
 \end{array}\end{equation}
where $G, H_0$ and $H_1$ are the simply connected Lie groups
integrating $\frkg$, $\frkh_0$ and $\frkh_1$ respectively. We now
explain how to construct this morphism by using the path theory.
Denote by $P(\frkg)$ the $C^1$ path space of $\frkg$ with a
convenient boundary condition,
\begin{equation}\label{eqn:pg}
 P(\frkg)=\{a:
C^1\;\text{morphism from}\;[0,1]\; \text{to}\; \frkg|~
a(0)=a(1)=0,~a^\prime(0)=a^\prime(1)=0\}.
\end{equation}
Then $P(\frkg)$ naturally has a smooth structure of Banach manifold
because we choose $C^1$ paths (see for example
\cite[Sect. 2]{tz2}). From now on, when not specially mentioned, {\em all
morphisms are of $C^1$-classes}.

The concatenation $\odot$ of two paths, $a(t)$ and $b(t)$ in
$P(\frkg)$ is defined as follows:
\begin{equation}\label{composition of paths}
a(t)\odot b(t)=\Big\{\begin{array}{cc}2b(2t)&0\leq t \leq
\half\\
2a(2t)&\half\leq t \leq 1\end{array}.
\end{equation}
The paths $a_0$ and $a_1$ are said to be $\frkg$-{\bf homotopic} and we
write $a_0\thicksim a_1$,  if there exist $C^1$-morphisms $a, b: [0,
1]^{\times 2} \to \frkg$ satisfy the following differential
equation
\begin{equation}\label{eq:g-homotopy}
\pat b(t,s)-\pae a(t,s)=[a(t,s),b(t, s)]_\frkg
\end{equation}
with boundary value $b(0,s)=0$, $b(1,s)=0,$ $a(t, 0)=a_0(t)$ and $a(t,
1)=a_1(t)$. This is  equivalent \cite{brahic-zhu} to the fact that,
$$
a(t,s)dt+b(t,s)ds:TI\times TI\longrightarrow \frkg
$$
is a $C^1$ Lie algebroid morphism and $b(0,s)=b(1,s)=0.$

Then the simply connected Lie group $G$ of $\frkg$ is the quotient
$G\cong P(\frkg)/\sim$ (see \cite[Sect. 1.13]{Lie
groups} for more details). So if $\mu:\frkg\longrightarrow\frkh_0$ is
a Lie algebra morphism, we have
\begin{eqnarray}
\nonumber\pat\mu(b(t,s))-\pae \mu(a(t,s))&=&\mu(\pat b(t,s)-\pae
a(t,s))=\mu[a(t,s),b(t, s)]_\frkg\\
\label{eqn:mu homomorphism}&=&[\mu(a(t,s)),\mu(b(t, s))]_{\frkh_0}.
\end{eqnarray}
Furthermore, it is evident that $\mu(b(0,s))=\mu(b(1,s))=0$, which
implies $\mu(a_i)$ are still homotopic in $\frkh_0$ and therefore
there is a well defined morphism from $G$ to $H_0$, which integrates
$\mu$.}

\subsection{Equivalence of two 2-groups}
Given a Lie algebra $\frkg$, denote by $P\frkg$ the usual $C^2$ path space
in $\frkg$ and  by $P^0\frkg$ the $C^2$ path space in $\frkg$ with a
convenient boundary condition,
\begin{equation}\label{eqn:pg}
 P^0\frkg=\{a:
C^2\;\text{morphism from}\;[0,1]\; \text{to}\; \frkg|~
a(0)=a(1)=0,~a^\prime(0)=a^\prime(1)=0\}.
\end{equation}
Then both $P\frkg$ and $P^0\frkg$ naturally have a smooth structure of Banach manifold
because we choose $C^2$ paths (see for example
\cite[Sect. 2]{tz2}).  From now on, when not specially mentioned, {\em all
morphisms are of $C^2$-classes}.

The paths $a_0$ and $a_1$ are said to be $\frkg$-{\bf homotopic} and we
write $a_0\thicksim a_1$,  if there exist $C^2$-morphisms $a, b: [0,
1]^{\times 2} \to \frkg$ satisfy the following differential
equation
\begin{equation}\label{eq:g-homotopy}
\pat b(t,s)-\pae a(t,s)=[a(t,s),b(t, s)]_\frkg
\end{equation}
with boundary value $b(0,s)=0$, $b(1,s)=0,$ $a(t, 0)=a_0(t)$ and $a(t,
1)=a_1(t)$.  This is  equivalent \cite{brahic-zhu} to the fact that,
$$
a(t,s)dt+b(t,s)ds:TI\times TI\longrightarrow \frkg
$$
is a Lie algebroid morphism and $b(0,s)=b(1,s)=0.$ The
$\frkg$-homotopy also restricts to $P^0\frkg$ (see \cite{tz2}). Then the simply connected Lie group $G$ of $\frkg$ is the quotient
\[G\cong P^0 \frkg/\sim = P \frkg /\sim :=\tau_1(\int \frkg), \] (see \cite[Sect. 1.13]{Lie
groups} for more details).

Next, we recall the construction (in \cite{henriques}) of the
2-group structure of $\tau_2(\int \frkh)$ for a strict Lie 2-algebra
$\frkh$.  Let
\[ P_1 \frkh:= \{a:
C^2\;\text{morphism}~[0,1] \to\frkh \}=\{a: C^2
  \;\text{morphism} ~[0,1]\to \frkh_0 \},  \]
  and
\begin{eqnarray*} P_2  \frkh := \{ (a, b, z): [0,1]^{\times 2} \xrightarrow{(a,
b, z)} \frkh_0 \oplus \frkh_0 \oplus \frkh_1| \partial_t
b-\partial_s a = l_2(a, b)+ \dM z,\\
b(0,s)=b(1,s)=z(0,s)=z(1,s)=0\}.\end{eqnarray*} There is an
equivalence relation $\sim$ defined on $P_2\frkh$: $(a^0, b^0, z^0)
\sim (a^1, b^1, z^1) $ if and only if there are
\[ (a, b, c): [0,1]^{\times 3} \to \frkh_0, \quad (x, y, z):
[0,1]^{\times 3} \to \frkh_1, \]
such that
\begin{eqnarray}
\label{eq:abz} \partial_t b-\partial_s a &= &l_2(a, b) +\dM z, \\
\label{eq:cay} \partial_t c - \partial_u a &= &l_2(a, c) +\dM y, \\
\label{eq:cbx} \partial_s c - \partial_u b &= &l_2(b, c) + \dM x, \\
\label{eq:xyz} \partial_u z - \partial_s y + \partial_t x & =
&l_2(a, x) - l_2(b, y) + l_2(c, z)
\end{eqnarray}
with boundary conditions:
\[ c(t, s, u), x(t, s, u), y(t, s, u)|_{t=0\;\text{ or}\; 1, \;\text{or} \;s=0\;\text{ or}\;
  1} =0, \quad z(t,s,u)|_{t=0\;\text{ or}\; 1} =0,  \]
and
\[ a(t,s,0)=a^0, a(t,s,1)=a^1, b(t,s,0)=b^0, b(t,s,1)=b^1,
z(t,s,0)=z^0, z(t,s,1)=z^1, \]
\[ a(t, 0, u)=a^0(t, 0)=a^1(t,0), \quad a(t, 1, u)=a^0(t, 1)=a^1(t, 1). \]
Then $P_2\frkh/\sim \Rightarrow P_1\frkh$ is a groupoid with the
source and target evaluation of $a$ on $s=0$ and $s=1$ respectively.
Moreover the 2-group of  $\tau_2(\int \frkh)$ is exactly the 2-group
structure on $P_2\frkh/\sim \Rightarrow P_1\frkh$  with vertical
multiplication the concatenation with respect to the parameter $s$
and horizontal multiplication the concatenation with respect to the
parameter $t$. Later on we will give a reparametrized horizontal
multiplication \eqref{eq:atdot} \eqref{eq:btdot} \eqref{eq:ztdot}
for convenience. However, we notice that reparametrization will not
change the class in $P_2 \frkh$:
\begin{lem} Given an element $(a, b, z) \in P_2 \frkh$ and
  reparametrizations $\tau_i: [0, 1] \to [0, 1]$, $(a, b, z) \sim
  (a^\tau, b^\tau, z^\tau) \in P_2\frkh$, where
\[ a^\tau(t, s) = \tau_1'(t) a(\tau_1(t),\tau_2(s)), \quad
b^\tau(t,s)=\tau_2'(s) b(\tau_1(t), \tau_2(s)), \quad
z^\tau(t,s)=\tau'_1(t)\tau'_2(s) z(\tau_1(t), \tau_2(s)).  \]
\end{lem}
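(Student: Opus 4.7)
The plan is to interpret $(a,b,z)$ as the pullback to $[0,1]^{\times 2}$ of an ``$\frkh$-valued flat 2-connection'' on the target square via the identity map, and $(a^\tau,b^\tau,z^\tau)$ as the pullback via $\Phi(t,s)=(\tau_1(t),\tau_2(s))$. Since any reparametrization is smoothly homotopic to the identity through reparametrizations, the required 3-parameter extension on $[0,1]^{\times 3}$ will be produced by pulling back along a smooth family connecting the identity to $\Phi$.

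Concretely, first choose smooth homotopies $\tau_i^u$, $u\in[0,1]$, from $\tau_i^0=\Id$ to $\tau_i^1=\tau_i$, through orientation-preserving diffeomorphisms of $[0,1]$, taken to be the identity on a neighborhood of $\{0,1\}$ so that the needed boundary conditions are automatically constant in $u$. Writing $T=\tau_1^u(t)$, $S=\tau_2^u(s)$ and $\Phi^u(t,s)=(T,S)$, define
\begin{eqnarray*}
\bar a &=& (\partial_t T)\,a(T,S),\qquad \bar b = (\partial_s S)\,b(T,S),\qquad \bar c = (\partial_u T)\,a(T,S)+(\partial_u S)\,b(T,S),\\
\bar z &=& (\partial_t T)(\partial_s S)\,z(T,S),\quad \bar y = (\partial_t T)(\partial_u S)\,z(T,S),\quad \bar x = -(\partial_u T)(\partial_s S)\,z(T,S).
\end{eqnarray*}
These are the $dt$-, $ds$-, $du$-, $dt\wedge ds$-, $dt\wedge du$- and $ds\wedge du$-components of $(\Phi^u)^*(a\,dT+b\,dS)$ and $(\Phi^u)^*(z\,dT\wedge dS)$. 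By construction, $(\bar a,\bar b,\bar z)|_{u=0}=(a,b,z)$ and $(\bar a,\bar b,\bar z)|_{u=1}=(a^\tau,b^\tau,z^\tau)$.

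Next, verify equations \eqref{eq:abz}--\eqref{eq:xyz} by a direct chain-rule computation: each becomes a nonzero common factor (a product of partial derivatives of $\tau_1^u,\tau_2^u$) times the original identity $\partial_T b-\partial_S a-l_2(a,b)-\dM z=0$, after using graded-antisymmetry $l_2(a,a)=l_2(b,b)=0$ on $\frkh_0$. Finally verify the boundary conditions: $\partial_u\tau_i^u\equiv 0$ near $\{0,1\}$ combined with $b(0,\cdot)=b(1,\cdot)=0$ forces $\bar c,\bar x,\bar y$ to vanish on $t,s\in\{0,1\}$; $\bar z$ vanishes at $t\in\{0,1\}$ because $z(0,\cdot)=z(1,\cdot)=0$; and $\bar a(t,s_0,u)=a(t,s_0)=a^\tau(t,s_0)$ for $s_0\in\{0,1\}$ since $\tau_i^u$ is the identity near the endpoints and the same will be assumed of $\tau_i$.

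The main obstacle is the Bianchi-type identity \eqref{eq:xyz}: conceptually it is automatic because the pullback of a flat $L_\infty$-connection is flat, but algebraically it requires patient bookkeeping to see how the three $dT\wedge dS$ appearances of $(\partial_T b-\partial_S a)$ combine with graded antisymmetry of $l_2$ to yield exactly $l_2(\bar a,\bar x)-l_2(\bar b,\bar y)+l_2(\bar c,\bar z)$. A secondary technical point is the pointwise matching of $\bar a$ with $a$ and $a^\tau$ on $s\in\{0,1\}$, which is precisely why we insist that the interpolating reparametrizations $\tau_i^u$ (and hence $\tau_i$) be the identity in a neighborhood of the boundary of $[0,1]$.
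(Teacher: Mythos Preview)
Your approach is essentially the paper's: both produce the homotopy by pulling back $(a,b,z)$ along a family of reparametrizations interpolating between the identity and $(\tau_1,\tau_2)$. The paper simply names the affine interpolation
\[
h(t,s,u)=\big((1-u)t+u\,\tau_1(t),\,(1-u)s+u\,\tau_2(s)\big)
\]
and invokes the functoriality of pullback on d.g.c.a.\ morphisms $\wedge^\bullet\frkh\to\Omega^\bullet(-)$, whereas you write out the six components $\bar a,\bar b,\bar c,\bar x,\bar y,\bar z$ of this pullback and check \eqref{eq:abz}--\eqref{eq:xyz} by hand. Conceptually there is no difference; your version is just the coordinate unpacking of theirs.

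There is, however, a genuine slip in your treatment of the boundary condition $a(t,0,u)=a^0(t,0)=a^1(t,0)$ (and similarly at $s=1$). You claim that taking $\tau_i^u$ to be the identity \emph{near the endpoints of $[0,1]$} forces $\bar a(t,s_0,u)=a(t,s_0)$ for $s_0\in\{0,1\}$. But for $s_0=0$ one computes
\[
\bar a(t,0,u)=(\tau_1^u)'(t)\,a\big(\tau_1^u(t),0\big),
\]
and this equals $a(t,0)$ for \emph{all} $t$ only if $\tau_1^u$ is the identity everywhere, not just near $\{0,1\}$. In other words, assuming $\tau_1$ is the identity near the endpoints does not pin down the source path; the lemma as stated really needs either $\tau_1=\Id$ or the source/target paths $a(\cdot,0),a(\cdot,1)$ to be trivial for the strict equivalence $\sim$ to apply. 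To be fair, the paper does not verify this boundary condition either---its one-line pullback argument has the same loose end---and in the intended application (comparing different cut-off functions in the concatenations \eqref{eq:atdot}--\eqref{eq:ztdot}) one is implicitly working modulo reparametrization on $P_1\frkh$ as well, where the issue dissolves. So your argument is as complete as the paper's, but your stated justification for that last boundary condition is not correct as written.
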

\begin{proof}In general, elements in  $P_n\frkh$ are d.g.c.a. morphisms
  $\wedge^\bullet \frkh \to \Omega^\bullet ([0,1]^{\times n})$ with
  certain boundary conditions, and the homotopies $\sim$ are
  d.g.c.a. morphisms  $\wedge^\bullet \frkh \to \Omega^\bullet
  ([0,1]^{\times (n +1)})$ with certain boundary conditions. We define $h:
  [0,1]^{\times (n+1)} \to  [0,1]^{\times n} $
\[h(t_1, \dots, t_{n+1}) :=\big( ((1-t_{n+1})t_1+ t_{n+1} \tau_1(t_1)),
\dots, (1-t_{n+1})t_n+ t_{n+1} \tau_n(t_n) \big), \] pulling back
forms by $h$ provides the desired homotopy. See also \cite[Remark
3.10]{brahic-zhu} and \cite[Lemma 1.5]{crainic} for similar
treatment.
\end{proof}

\emptycomment{
Now given a general DGLA morphism $(\mu, \nu): \frkg \to
(\frkh_0\xrightarrow{\dM} \frkh_1)$, instead of
(\ref{eqn:mu homomorphism}), by \eqref{eqn:DGLA morphism c 1},  we
obtain
\begin{equation}
\pat\mu(b(t,s))-\pae \mu(a(t,s))=[\mu(a(t,s)),\mu(b(t,
s))]_{\frkh_0}+\dM\nu(a(t,s),b(t, s)).
\end{equation}
Therefore, the $\mu(a_i)$'s are no longer homotopic in $\frkh_0$, so
we can not construct a simple morphism as in (\ref{ta:G H0}). We
need some technical lemmas  to solve this problem. In this section,
$H_0, H_1, G$ are the simply connected Lie groups corresponding to
$\frkh_0, \frkh_1$, and $\frkg$ respectively.}

We first construct our equivalence with the help of  a couple of lemmas.

\begin{lem}\label{lem:homotopy}
Let $(a, b, z) \in P_2 \frkh$.
Let $\Delta b:[0,1]^{\times 2}\longrightarrow\frkh_1$ satisfy the following
ordinary differential equation
\begin{equation}\label{eqn:delta b}
\pat\Delta b=l_2(a,\Delta b)-z
\end{equation}
with  initial value $\Delta b(0,s)=0$. Denote
$b+\dM\Delta b $ by  $\widetilde{b}$, then we have
\begin{equation} \label{eq:atb}
\pat\widetilde{b}-\pae a=[a,\widetilde{b}]_{\frkh_0}.
\end{equation}
\end{lem}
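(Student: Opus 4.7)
The plan is a direct computation: differentiate $\widetilde{b} = b + \dM \Delta b$ in the variable $t$, then combine the ODE \eqref{eqn:delta b}, the defining equation of $(a,b,z) \in P_2\frkh$, and the crossed-module compatibility $dt(\phi_u(m)) = [u, dt(m)]_{\frkh_0}$ to reassemble the right-hand side into $[a,\widetilde{b}]_{\frkh_0}$.

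First I would write
\[ \partial_t \widetilde{b} = \partial_t b + \dM(\partial_t \Delta b) = \partial_t b + \dM\bigl(l_2(a, \Delta b)\bigr) - \dM z, \]
using \eqref{eqn:delta b}. Next, since $l_2(u,\cdot)$ acts as $\phi_u$ on $\frkh_1$, the crossed-module relation gives $\dM l_2(a, \Delta b) = [a, \dM \Delta b]_{\frkh_0}$, so
\[ \partial_t \widetilde{b} = \partial_t b + [a, \dM \Delta b]_{\frkh_0} - \dM z. \]
Then I would invoke the constraint that $(a,b,z) \in P_2\frkh$, namely $\partial_t b = \partial_s a + l_2(a,b) + \dM z$, and recognize that $l_2$ restricted to two arguments in $\frkh_0$ is the bracket $[\cdot,\cdot]_{\frkh_0}$. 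The two $\dM z$ terms cancel, leaving
\[ \partial_t \widetilde{b} - \partial_s a = [a, b]_{\frkh_0} + [a, \dM \Delta b]_{\frkh_0} = [a, b + \dM \Delta b]_{\frkh_0} = [a, \widetilde{b}]_{\frkh_0}, \]
which is \eqref{eq:atb}.

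There is essentially no obstacle: existence and uniqueness of $\Delta b$ follow from standard ODE theory, since \eqref{eqn:delta b} is a linear ODE in $t$ with $s$ as a smooth parameter, so $\Delta b$ is $C^2$ in $(t,s)$. The content of the lemma is really the observation that the right-hand side $-z$ of \eqref{eqn:delta b} is chosen precisely so that, after applying $\dM$ and using the crossed-module identity, the unwanted defect $\dM z$ in $\partial_t b - \partial_s a - [a,b]_{\frkh_0}$ is cancelled while producing exactly the extra bracket term $[a, \dM \Delta b]_{\frkh_0}$ needed to upgrade $b$ to $\widetilde{b}$.
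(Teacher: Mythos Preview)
Your proof is correct and follows essentially the same route as the paper's: both compute $\partial_t\widetilde{b}-\partial_s a-[a,\widetilde{b}]_{\frkh_0}$ using the ODE \eqref{eqn:delta b}, the defining equation of $P_2\frkh$, and the crossed-module identity $\dM l_2(a,\Delta b)=[a,\dM\Delta b]_{\frkh_0}$, with only cosmetic differences in the order of substitution. Your extra remark on existence and smoothness of $\Delta b$ via linear ODE theory is a welcome addition that the paper leaves implicit.
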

\pf The conclusion follows from
\begin{eqnarray*}
\pat\widetilde{b}-\pae a-[a,\widetilde{b}]_{\frkh_0}&=&\pat b +\pat
\dM\Delta b-\pae a-[a,b]_{\frkh_0}-[a,\dM\Delta
b]_{\frkh_0}\\
&=&\pat \dM\Delta b+\dM z-[a,\dM\Delta
b]_{\frkh_0}\\
&=&\dM(\pat\Delta b+z-l_2(a,\Delta b))=0. \qed
\end{eqnarray*}

Since $\Delta b(0,s)=0,$ we have
$\widetilde{b}(0,s)=b(0,s)+\dM\Delta b(0,s)=0$. But
$\widetilde{b}(1,s)=b(1,s)+\dM\Delta b(1,s)=\dM\Delta b(1,s) $ is not necessarily zero and
this is exactly the obstruction of $a(-, 0)$ and $a(-, 1)$ being
homotopic.

\begin{pro}\label{pro:concatenation} With the above notations, the concatenation of $\dM\Delta
b(1,-)$ and $a(-,0)$ is homotopic to $a(-,1)$ in $P \frkh_0$,
i.e. we have
 $$\dM\Delta
b(1,-) \odot a(-,0)\sim a(-,1). $$
where the concatenation $\odot$ of two paths, $a(t)$ and $b(t)$ is defined as follows:
\begin{equation}\label{composition of paths}
a(t)\odot b(t)=\Big\{\begin{array}{cc}2\tau'(t)b(\tau(2t))&0\leq t \leq
\half\\
2\tau'(t)a(\tau(2t))&\half\leq t \leq 1\end{array},
\end{equation}
with a cut-off function $\tau:[0,1] \to [0,1]$ such that
\begin{equation}\label{eq:cutoff}
\tau(0)=0, \quad \tau(1)=1, \quad \tau'(t)>0, \quad \forall t \in
[0,1].
\end{equation}
\end{pro}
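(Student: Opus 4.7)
The plan is to interpret equation \eqref{eq:atb} of Lemma \ref{lem:homotopy} as the Maurer--Cartan condition asserting that $a\,dt+\widetilde{b}\,ds$ is a Lie algebroid morphism $T([0,1]^{\times 2})\to\frkh_0$, i.e., a flat $\frkh_0$-connection on the square. Under this interpretation, smooth paths in the square pull back to paths in $\frkh_0$, and homotopies between such paths pull back to $\frkh_0$-homotopies. In the $(t,s)$-square, the top edge $\{s=1\}$ pulls back to $a(-,1)$, while the L-shape consisting of the bottom edge $\{s=0\}$ followed by the right edge $\{t=1\}$ pulls back to a concatenation of $a(-,0)$ with $\widetilde{b}(1,-)=\dM\Delta b(1,-)$ (using $b(1,s)=0$ from the definition of $P_2\frkh$). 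Contractibility of the square should then produce the claimed $\frkh_0$-homotopy by the usual pullback argument.

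Concretely, I plan to construct a reparametrization $F=(T,S):[0,1]^{\times 2}_{(t,u)}\to [0,1]^{\times 2}_{(t',s')}$ with $F(t,1)=(t,1)$, $F(0,u)=(0,u)$, $F(1,u)=(1,1)$, and with $F(t,0)$ tracing the L-shape, parametrized via the cutoff $\tau$ from \eqref{eq:cutoff} so as to reproduce the convention \eqref{composition of paths}. Setting $\tilde{a}(t,u)=\partial_t T\cdot a(F(t,u))+\partial_t S\cdot\widetilde{b}(F(t,u))$ and $\tilde{b}(t,u)=\partial_u T\cdot a(F(t,u))+\partial_u S\cdot\widetilde{b}(F(t,u))$, a direct computation using \eqref{eq:atb} (or invocation of the naturality of Lie algebroid pullback) will give $\partial_t\tilde{b}-\partial_u\tilde{a}=[\tilde{a},\tilde{b}]_{\frkh_0}$. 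The boundary choices force $\tilde{b}(0,u)=0$ (from $\widetilde{b}(0,\cdot)=0$, which in turn comes from $b(0,\cdot)=0$ and $\Delta b(0,\cdot)=0$), $\tilde{b}(1,u)=0$ (since $F(1,\cdot)$ is constant, so $\partial_u T=\partial_u S=0$ there), $\tilde{a}(t,1)=a(t,1)$, and $\tilde{a}(t,0)=\dM\Delta b(1,-)\odot a(-,0)$, which together furnish the required $\frkh_0$-homotopy in $P\frkh_0$.

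The main technical nuisance is making $F$ of class $C^2$ at the corner $(1/2,0)$ of the L-shape, where the bottom and right edges meet; this is precisely the reason the cutoff $\tau$ and the weighting factors $2\tau'(t)$ appear in \eqref{composition of paths}. The conceptual heart of the argument is already packaged into Lemma \ref{lem:homotopy}, which absorbs the obstructing term $z$ into an exact $\dM$-correction so that $(a,\widetilde{b})$ satisfies the honest $\frkh_0$-Maurer--Cartan equation on a contractible square; once this is in hand, the proposition reduces to the tautology that holonomy of a flat connection on a simply-connected domain is path-independent.
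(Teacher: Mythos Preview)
Your argument is correct and follows a genuinely different route from the paper's. The paper integrates the flat $\frkh_0$-connection $a\,dt+\widetilde{b}\,ds$ to a map $g:[0,1]^{\times 2}\to H_0$ with $\partial_t g\cdot g^{-1}=a$ and $\partial_s g\cdot g^{-1}=\widetilde{b}$, observes that in the group $H_0$ the concatenation $g(1,-)\odot g(-,0)$ is topologically homotopic to $g(-,1)$ (via $g$ itself), and then reads this back as an $\frkh_0$-homotopy. You instead remain entirely at the Lie-algebra level, pulling back $a\,dt+\widetilde{b}\,ds$ along an explicit smooth retraction $F$ of the square that compares the top edge to the L-shape; naturality of Lie algebroid morphisms under pullback then gives the $\frkh_0$-homotopy directly.

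What each approach buys: the paper's argument is shorter and conceptually transparent once one accepts the passage to $H_0$, but it tacitly uses the existence and properties of the group integration. Your approach is self-contained at the infinitesimal level and makes the role of the Maurer--Cartan equation \eqref{eq:atb} completely explicit; it also generalizes verbatim to settings (e.g.\ Lie algebroids) where a global group may not be available. The price you pay is the bookkeeping of constructing $F$ with the required boundary behaviour and $C^2$ regularity near $(1/2,0)$, which the paper's proof sidesteps by working with topological homotopies of group-valued paths and invoking the standard dictionary between $H_0$-homotopies and $\frkh_0$-homotopies.
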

\pf Since we have
$
\pat\widetilde{b}-\pae a=[a,\widetilde{b}],
$
there exists a family of paths, $g(t,s)$ in the Lie group $H_0$ such
that
$$
\pat g(t,s)\cdot g(t,s)^{-1}=a(t,s),\quad \pae g(t,s)\cdot
g(t,s)^{-1}=\widetilde{b}(t,s).
$$
Since $\widetilde{b}(0,s)=0$, we know that $g(0,s)$ is fixed. Since
$\widetilde{b}(1,s)\neq 0$, $g(1,s)$ is not a constant path in
$H_0$. So $g(t,0)$ and $g(t,1)$ are not homotopic. However, it is
obvious that the concatenation of $g(1,s)$ and $g(t,0)$ is homotopic
to $g(t,1)$ in the Lie group $H_0$. Therefore, the corresponding
$\frkh_0$-paths are $\frkh_0$-homotopic. \qed

\begin{lem}\label{lem:path action}
Let $\frkh_0$ and $\frkh_1$ be two Lie algebras,
$\phi:\frkh_0\longrightarrow \Der(\frkh_1)$ a Lie algebra morphism,
i.e. Lie algebra $\frkh_0$ acts on Lie algebra $\frkh_1$ as a
derivation. Let $\varphi:H_0\longrightarrow \Aut(\frkh_1)$ be the
Lie group morphism which integrates $\phi$, and $h\in H_0$
represented by $a(t)\in P\frkh_0$, i.e. $h=[a(t)]$. Then for any
$v\in\frkh_1$ and $v(s)\in P\frkh_1$, we have
\begin{itemize}
\item[\rm(1).] $\varphi_h(v)=w(1)$, where $w(t)\in P\frkh_1$ is the
solution of the following ODE:
$$
\frac{d}{dt}w(t)=\phi_{a(t)}w(t)
$$
with the initial value $w(0)=v$.
\item[\rm(2).]$\varphi_hv(s)=w(1,s)$, where $w(t,s)$ is the solution
of the following ODE:
\begin{equation}\label{eqn:wts}
\pat w(t,s)=\phi_{a(t)}w(t,s)
\end{equation}
with the initial value $w(0,s)=v(s)$.
\end{itemize}
Consequently, the corresponding group action of $H_0$ on $H_1$, say
$\Phi:H_0\longrightarrow \Aut(H_1)$ is given by
\begin{equation}\label{eqn:action Phi}
\Phi_{h}([v(s)])=[\varphi_{h}v(s)]=[w(1,s)].
\end{equation}
\end{lem}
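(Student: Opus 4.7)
For part (1) the idea is to realise the stated ODE as the equation satisfied by the $\varphi$-orbit of $v$ along a horizontal lift of $a(t)$. Since $h=[a(t)]$, there is a unique $C^2$-path $g:[0,1]\to H_0$ with $g(0)=e_{H_0}$, $g(1)=h$ and $\tfrac{d}{dt}g(t)\cdot g(t)^{-1}=a(t)$. Set $w(t):=\varphi_{g(t)}(v)$. Using the cocycle identity $\varphi_{g_1 g_2}=\varphi_{g_1}\circ\varphi_{g_2}$ together with the defining property that $\varphi$ integrates $\phi$, i.e. $d\varphi|_{e_{H_0}}=\phi$, a one-line chain rule gives
\[
\frac{d}{dt}w(t)=\frac{d}{d\epsilon}\Big|_{\epsilon=0}\varphi_{g(t+\epsilon)g(t)^{-1}}(w(t))=\phi_{a(t)}w(t),
\]
while $w(0)=\varphi_{e_{H_0}}(v)=v$. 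By uniqueness of solutions to this linear ODE, $w$ coincides with the function appearing in the statement of (1), and evaluation at $t=1$ yields $w(1)=\varphi_h(v)$.

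Part (2) is (1) applied parametrically. The equation \eqref{eqn:wts} only differentiates in $t$, so for each fixed $s$ the function $t\mapsto w(t,s)$ solves the ODE of (1) with initial datum $v(s)$; hence $w(1,s)=\varphi_h(v(s))$. For the consequence \eqref{eqn:action Phi}, let $\tilde h:[0,1]\to H_1$ be the path with $\tilde h(0)=e_{H_1}$ and $\pae \tilde h(s)\cdot \tilde h(s)^{-1}=v(s)$, so that $[v(s)]=\tilde h(1)$. Because $\Phi_h$ is a Lie group automorphism of $H_1$ whose differential at $e_{H_1}$ equals $\varphi_h$, the composition $\Phi_h\circ\tilde h$ is a path in $H_1$ whose right-logarithmic derivative is $\varphi_h(v(s))=w(1,s)$. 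Therefore $\Phi_h([v(s)])=\Phi_h(\tilde h(1))=[w(1,s)]$, as required.

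The argument is essentially a chain-rule calculation, so there is no deep obstacle. The only fact that deserves explicit acknowledgement is the identity $d\Phi_h|_{e_{H_1}}=\varphi_h$; this is built into the construction of $\Phi:H_0\to \Aut(H_1)$, which is produced by first integrating $\phi$ to $\varphi:H_0\to\Aut(\frkh_1)$ and then, for each $h$, integrating the Lie algebra automorphism $\varphi_h\in\Aut(\frkh_1)$ to the Lie group automorphism $\Phi_h\in\Aut(H_1)$. Well-definedness on homotopy classes is automatic, since $\varphi_h$ depends on $a(t)$ only through $h=[a(t)]$, and $\Phi_h([v(s)])$ depends on $v(s)$ only through $[v(s)]\in H_1$.
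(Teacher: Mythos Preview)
Your argument is correct and follows essentially the same route as the paper: one defines $w(t)=\varphi_{g(t)}(v)$ for the path $g$ in $H_0$ with right-logarithmic derivative $a(t)$, and a chain-rule computation shows this solves the ODE, after which uniqueness and evaluation at $t=1$ finish part~(1); part~(2) is then the parametric version. Your treatment of the consequence \eqref{eqn:action Phi} is in fact more explicit than the paper's, which leaves that step to the reader.
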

\pf Let $h(t):[0,1] \longrightarrow H_0$ be a path such that $h(0)=e,~h(1)=h$ and $
a(t)=\dot{h}(t)h(t)^{-1}. $ Then we have $
\phi_{a(t)}=\dot{\varphi}_{h(t)}\varphi_{h(t)}^{-1},
$
which implies that
$
\phi_{a(t)}\circ\varphi_{h(t)}(v)=\dot{\varphi}_{h(t)}(v).
$
Take $w(t)=\varphi_{h(t)}(v)$, then $w(t)$ satisfies the following
ODE:
$$
\dot{w}(t)=\phi_{a(t)}w(t)
$$
with the initial value $w(0)=\varphi_{h(0)}(v)=\varphi_{e}(v)=v$.
Obviously, $\varphi_h(v)=\varphi_{h(1)}(v)=w(1)$. This completes the
proof of item (1). Item (2) can be proved similarly. \qed\vspace{3mm}

\emptycomment{
Define $P^2(\frkg)$  by
\begin{eqnarray*}
P^2(\frkg)\triangleq\{a(t,s)dt+b(t,s)ds: C^1\;\text{Lie algebroid morphism
  from} \;T(I\times
I)\longrightarrow\frkg| \\\mbox{ such that}\quad
a(0,s)=a(1,s)=b(0,s)=b(1,s)=b(t,1)=b(t,0)=0\}/\sim,
\end{eqnarray*}
where the  equivalence relation $\sim$ on $P^2(\frkg)$ is defined as
follows: $a^0(t,s)dt+b^0(t,s)ds$ and $a^1(t,s)dt+b^1(t,s)ds$ are
said to be equivalent, and we write $a^0dt+b^0ds\thicksim
a^1dt+b^1ds$, if there exists a $C^1$ map $c:I^3\longrightarrow
\frkg$ such that
$$
a(t,s,u)dt+b(t,s,u)ds+c(t,s,u)du:TI^3\longrightarrow \frkg
$$
is a Lie algebroid morphism and
\begin{eqnarray*}
c(t,s,u)|_{t=0,1,s=0,1}&=&0,\\
a(t,s,0)dt+b(t,s,0)ds&=&a^0(t,s)dt+b^0(t,s)ds,\\
a(t,s,1)dt+b(t,s,1)ds&=&a^1(t,s)dt+b^1(t,s)ds.
\end{eqnarray*}

\begin{lem}{\rm \cite[Section 3.3]{z:lie2}}\label{lem:path2group} With the above notations,   $P^2(\frkg)$ is
again a Banach manifold, and $\begin{array}{c}
P^2(\frkg)\\
\downarrow\downarrow\vcenter{\rlap{}}\\P(\frkg)
 \end{array}$ is a semi-strict Lie (Banach) 2-group with nontrivial associator $a_{a_3,a_2,a_1}:(a_3\odot a_2)\odot a_1\longrightarrow a_2\odot(a_2\odot
 a_1)$ and various other 2-morphisms, all given by
 reparametrization, where $P(\frkg)$ is given by \eqref{eqn:pg}.
\end{lem}
}

 For
bigons $\xymatrix@C+2em{
 \bullet &
  \ar@/_1pc/[l]_{a_0}_{}="0"
  \ar@/^1pc/[l]^{a_1}^{}="1"
  \ar@{=>}_z"0";"1"^{}
  \bullet }$ and $\xymatrix@C+2em{\bullet &
  \ar@/_1pc/[l]_{a_0^\dag}_{}="2"
  \ar@/^1pc/[l]^{a_1^\dag}^{}="3"
  \ar@{=>}_{z^\dag}"2";"3"^{}
  \bullet}$, which represent $(a, b, z)
 , (a^\dag,b^\dag, z^\dag) \in P_2 \frkh$ respectively, assume that $\Delta b$
  and $\Delta b^\dag$ are the corresponding solutions of (\ref{eqn:delta
  b}) respectively. We reparametrized the concatenation with respect to
$t$, namely the bigon of horizontal multiplication $\xymatrix@C+2em{
 \bullet &
  \ar@/_1pc/[l]_{a_0}_{}="0"
  \ar@/^1pc/[l]^{a_1}^{}="1"
  \ar@{=>}_z"0";"1"^{}
  \bullet &
  \ar@/_1pc/[l]_{a_0^\dag}_{}="2"
  \ar@/^1pc/[l]^{a_1^\dag}^{}="3"
  \ar@{=>}_{z^\dag}"2";"3"^{}
  \bullet}$ as
\begin{eqnarray} \label{eq:atdot}
a^\ddag(t,s)&=&\left\{\begin{array}{cc} a^\dag(t,2s),& 0\leq t\leq
1,~0\leq
s\leq \half\\
a_1^\dag(t),&0\leq t\leq 1,~ \half\leq
s\leq 1\\
a_0(t-1),& 1\leq t \leq 2,~0\leq s\leq \half\\
a(t-1,2s-1),&1\leq t \leq 2,~\half\leq s\leq 1
\end{array}\right.\\\label{eq:btdot}
b^\ddag(t,s)&=&\left\{\begin{array}{cc} 2b^\dag(t,2s),& 0\leq t\leq
1,~0\leq
s\leq \half\\
0,&0\leq t\leq 1,~ \half\leq
s\leq 1\\
0,& 1\leq t \leq 2,~0\leq s\leq \half\\
2b(t-1,2s-1),&1\leq t \leq 2,~\half\leq s\leq 1,
\end{array}\right.\\ \label{eq:ztdot}
z^\ddag(t,s)&=&\left\{\begin{array}{cc} 2z^\dag(t,2s),& 0\leq t\leq
1,~0\leq
s\leq \half\\
0,&0\leq t\leq 1,~ \half\leq
s\leq 1\\
0,& 1\leq t \leq 2,~0\leq s\leq \half\\
2z(t-1,2s-1),&1\leq t \leq 2,~\half\leq s\leq 1,
\end{array}\right.
\end{eqnarray}
and denote by $\Delta b^\ddag$ the corresponding solution of
(\ref{eqn:delta b}).

\begin{lem}\label{lem:delta b}
With the above notations, we have
\begin{equation}
\Delta b^\ddag(2,s)=\Delta b(1,s)\odot w(1,s),
\end{equation}
where $w(t,s):[0,1]^{\times 2}\longrightarrow\frkh_1$ is the solution of
the following ODE:
\begin{equation}
\pat w(t,s)=l_2(a_0(t),w(t,s))
\end{equation}
with the initial value $w(0,s)=\Delta b^\dag(1,s)$.
\end{lem}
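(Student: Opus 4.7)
The plan is to solve the ODE (\ref{eqn:delta b}) for $\Delta b^\ddag$ piecewise on the four subregions of $[0,2]\times[0,1]$ singled out by formulas (\ref{eq:atdot})--(\ref{eq:ztdot}), and then read off the value at $t=2$ and match it against the definition (\ref{composition of paths}) of $\odot$ applied to $\Delta b(1,\cdot)$ and $w(1,\cdot)$. The key observation is that (\ref{eqn:delta b}) is a linear inhomogeneous ODE in $t$ with $s$ as a mere parameter, so uniqueness plus linearity will let me identify $\Delta b^\ddag$ on each region with a rescaled and reparametrized copy of one of $\Delta b^\dag$, $0$, $w$, or $\Delta b$.

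I would begin with the two left regions ($0\le t\le 1$). On $0\le s\le 1/2$ the equation becomes $\partial_t \Delta b^\ddag = l_2(a^\dag(t,2s),\Delta b^\ddag)-2z^\dag(t,2s)$ with zero initial value, and $2\Delta b^\dag(t,2s)$ is readily seen to satisfy the same problem; hence by uniqueness $\Delta b^\ddag(t,s)=2\Delta b^\dag(t,2s)$. On $1/2\le s\le 1$ the forcing $z^\ddag$ and the initial value both vanish, so $\Delta b^\ddag\equiv 0$. This provides the initial data at $t=1$ for the right half: $\Delta b^\ddag(1,s)=2\Delta b^\dag(1,2s)$ on $[0,1/2]$ and $\Delta b^\ddag(1,s)=0$ on $[1/2,1]$.

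For the right regions ($1\le t\le 2$), on $s\in[0,1/2]$ we have $z^\ddag=0$ and $a^\ddag=a_0(t-1)$, reproducing (after the shift $t\mapsto t-1$ and rescaling $s\mapsto 2s$) exactly the ODE defining $w$. The initial data $2\Delta b^\dag(1,2s)$ matches $2w(0,2s)$, so uniqueness gives $\Delta b^\ddag(t,s)=2w(t-1,2s)$, and evaluating at $t=2$ yields $\Delta b^\ddag(2,s)=2w(1,2s)$. On $s\in[1/2,1]$, the equation becomes $\partial_t\Delta b^\ddag=l_2(a(t-1,2s-1),\Delta b^\ddag)-2z(t-1,2s-1)$ with zero initial data, which by the analogous comparison-and-rescale argument gives $\Delta b^\ddag(t,s)=2\Delta b(t-1,2s-1)$, and hence $\Delta b^\ddag(2,s)=2\Delta b(1,2s-1)$.

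Matching the two boundary values against the concatenation formula (\ref{composition of paths}) (invoking the preceding lemma on reparametrization invariance to absorb the smooth cutoff $\tau$) then identifies $\Delta b^\ddag(2,s)$ with $\Delta b(1,s)\odot w(1,s)$, as claimed. The main bookkeeping challenge is tracking the factor of $2$ produced on each region by the $s$-reparametrization and making sure the initial data at $t=1$ glue correctly at $s=1/2$; beyond that, the content reduces to uniqueness for linear ODEs in $t$.
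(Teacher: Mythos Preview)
Your argument is correct and is essentially the same as the paper's: both propose the piecewise candidate, verify it satisfies the linear ODE \eqref{eqn:delta b} with the right initial data on each of the four rectangles, and invoke uniqueness. One small remark: the appeal to the reparametrization lemma to ``absorb'' $\tau$ is unnecessary here---the horizontal multiplication \eqref{eq:atdot}--\eqref{eq:ztdot} and the concatenation $\odot_s$ in the paper's proof are already written with the naive factor of $2$ (no $\tau$), so the equality $\Delta b^\ddag(2,s)=\Delta b(1,s)\odot w(1,s)$ is meant literally in that convention, and your piecewise values $2w(1,2s)$, $2\Delta b(1,2s-1)$ match it on the nose.
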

\pf We prove a more general formula:
$$
\Delta b^\ddag(t,s)=\left\{
\begin{array}{cc} 0\odot_s \Delta b^\dag(t,s) ,& 0\leq t\leq
1\\
\Delta b(t-1,s)\odot_s w(t-1,s)&1\leq t \leq 2.
\end{array}\right.
$$
When $0\leq t\leq 1$,
$$
\Delta b^\ddag(t,s)=\left\{
\begin{array}{cc}         2\Delta b^\dag(t,s) ,& 0\leq s\leq
\half\\
0 &\half \leq s \leq 1.
\end{array}\right.
$$
Then, when $0\leq s\leq \half$, $\Delta b^\ddag(t,s)$ satisfies
(\ref{eqn:delta b}) since $\Delta b^\dag(t,s)$ does; when $\half\leq
s \leq 1$, $\Delta b^\ddag(t,s)=0$ obviously satisfies (\ref{eqn:delta b}).

 When $1\leq t \leq 2$, by straightforward
computations, we have
\begin{eqnarray*}
&&\pat\Delta b^\ddag(t,s)\\&=&\pat\Delta b(t-1,s)\odot_s \pat w(t-1,s)\\
&=&\Big(l_2(a(t-1,s),\Delta
b(t-1,s))-z(t-1,s))\Big)\odot_s
l_2(a_0(t-1),w(t-1,s))\\
&=&\left\{\begin{array}{cc} 2l_2(a_0(t-1),w(t-1,2s)),& 0\leq
s\leq
\half,\\
2l_2(a(t-1,2s-1),\Delta
b(t-1,2s-1))-2z(t-1,2s-1),& \half \leq s \leq
1.\end{array}\right.\\
&=&l_2(a^\ddag(t,s),\Delta
b^\ddag(t,s))-z^\ddag(t,s).
\end{eqnarray*}
The last equality holds because when $1\leq t\leq 2,~0\leq s\leq
\half$, $b^\ddag(t,s)=0$. \qed\vspace{3mm}

Define $\Psi_1:P_2\frkh{\longrightarrow} H_0\ltimes H_1$ by
\begin{equation}\label{Psi1}
\Psi_1\Big( \xymatrix@C+2em{
  \bullet &
  \ar@/_1pc/[l]_{a_0}_{}="0"
  \ar@/^1pc/[l]^{a_1}^{}="1"
  \ar@{=>}_z"0";"1"^{}
  \bullet}\Big)=\big([a_0],[\Delta
b(1,s)]\big),
\end{equation}
in which $\Delta b$ is the unique solution of (\ref{eqn:delta b})
with the initial value $\Delta b(0,s)=0$. To see that $\Psi_1$ is well
defined, for two elements $(a^0, b^0, z^0)$ and $(a^1,
b^1, z^1)$ in $P_2\frkh$ equivalent through $(a,b,c,x,y,z)$,  \emptycomment{ $A^0=a^0(t,s)dt+b^0(t,s)ds$ and
$A^1=a^1(t,s)dt+b^1(t,s)ds$ between them, i.e. there is a Lie
algebroid morphism:
$$
\huaA:a(t,s,u)dt+b(t,s,u)ds+c(t,s,u)du:TI^3\longrightarrow \frkg,
$$
satisfying $c(t,s,u)|_{t=0,1,s=0,1}=0$ such that
\begin{eqnarray*}
a(t,s,0)dt+b(t,s,0)ds&=&A^0,\qquad
a(t,0,u)=a_0,\\a(t,s,1)dt+b(t,s,1)ds&=&A^1,\qquad a(t,1,u)=a_1,
\end{eqnarray*}}
\[
\xymatrix@C+2em{
  \bullet &
  \ar@/_1pc/[l]_{a_0}_{}="0"
  \ar@/^1pc/[l]^{a_1}^{}="1"
  \ar@{=>}"0";"1"^{z^0}
  \bullet,
}\quad \xymatrix@C+2em{
  \bullet &
  \ar@/_1pc/[l]_{a_0}_{}="0"
  \ar@/^1pc/[l]^{a_1}^{}="1"
  \ar@{=>}"0";"1"^{z^1}
  \bullet},
  \quad \huaA=\xymatrix@C+2em{
  \bullet &
  \ar@/_1pc/[l]_{a_0}_{}="0"
  \ar@/^1pc/[l]^{a_1}^{}="1"
  \ar@{=>}@/^1pc/"0";"1"^{z^1}="2"
  \ar@{=>}@/_1pc/"0";"1"^{z^0}="3"
  \bullet,
}
\]
we need to prove that $\Delta b(1,s,0)$ and $\Delta b(1,s,1)$ are
homotopic in the Lie algebra $\frkh_1$. This follows from the next
lemma.
\begin{lem} Let $(a, b, c, x, y, z)$ be as above.
Let $\Delta b$, $\Delta c$: $[0,1]^{\times 3} \to \frkh_1$ be the solution
of the following ordinary
differential equations \begin{eqnarray} \label{eqn:delta b 1}
\pat\Delta b&=&l_2(a,\Delta b)-z,\\
\pat\Delta c&=&l_2(a,\Delta c)-y,
\end{eqnarray}
with the initial value $\Delta b(0,s,u)=\Delta c(0,s,u)=0$, we have
\begin{equation}\label{eqn:db dc}
\pae\Delta c(1,s,u)-\pau\Delta b(1,s,u)=[\Delta b(1,s,u),\Delta
c(1,s,u)]_{\frkh_1}
\end{equation}
and
\begin{equation}\label{eqn:dc}
\Delta c(1,1,u)=0.
\end{equation}
Hence $\Delta b(1,s,0)$ and $\Delta b(1,s,1)$ are homotopic.
\end{lem}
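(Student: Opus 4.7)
The plan is to establish the two identities \eqref{eqn:db dc} and \eqref{eqn:dc} using the full set of cocycle equations \eqref{eq:abz}--\eqref{eq:xyz}, and then to repackage the result as a $\frkh_1$-homotopy. The natural quantity to study is not the bare discrepancy
\[ \Phi(t,s,u) := \pae \Delta c - \pau \Delta b - [\Delta b,\Delta c]_{\frkh_1}, \]
but its corrected version
\[ \Psi(t,s,u) := \Phi - l_2(b,\Delta c) + l_2(c,\Delta b) + x. \]
The corrections are designed so that at $t=1$ they all vanish: the boundary conditions on $P_2 \frkh$ and on the homotopy give $b(1,s,u)=c(1,s,u)=x(1,s,u)=0$, whence $\Psi(1,s,u)=\Phi(1,s,u)$. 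At the other end, every ingredient of $\Psi$ already vanishes at $t=0$, so $\Psi(0,s,u)=0$.

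The main step is to verify the linear ODE
\[ \pat \Psi = l_2(a,\Psi), \]
after which uniqueness forces $\Psi \equiv 0$, and in particular $\Phi(1,s,u)=0$, which is exactly \eqref{eqn:db dc}. This verification is a direct but lengthy computation: one differentiates each summand of $\Psi$ in $t$, substitutes the defining ODEs for $\Delta b$ and $\Delta c$, eliminates $\pae a$ and $\pau a$ via \eqref{eq:abz} and \eqref{eq:cay}, rewrites $\pat x$ via \eqref{eq:xyz}, and then applies the Jacobi identity $l_2(a,l_2(b,-))-l_2(b,l_2(a,-))=l_2(l_2(a,b),-)$ together with the crossed-module identities $l_2(\dM m,n)=[m,n]_{\frkh_1}$ and $\dM l_2(u,m)=[u,\dM m]_{\frkh_0}$. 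The derivation property $l_2(a,[m,n]_{\frkh_1})=[l_2(a,m),n]_{\frkh_1}+[m,l_2(a,n)]_{\frkh_1}$ takes care of the bracket term. All remaining contributions pair up and cancel, leaving precisely $l_2(a,\Psi)$.

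The identity \eqref{eqn:dc} is much easier: the boundary condition $y(t,1,u)=0$ (part of the vanishing of $y$ on $s=0,1$) reduces the defining ODE for $\Delta c$ along $s=1$ to the homogeneous equation $\pat \Delta c(t,1,u)=l_2(a(t,1,u),\Delta c(t,1,u))$ with initial value $\Delta c(0,1,u)=0$, forcing $\Delta c(t,1,u)\equiv 0$; evaluating at $t=1$ gives the claim, and the analogous argument at $s=0$ yields $\Delta c(1,0,u)=0$.

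Finally, to conclude that $\Delta b(1,s,0)$ and $\Delta b(1,s,1)$ are $\frkh_1$-homotopic, set $W(s,u):=\Delta b(1,s,u)$ and $V(s,u):=\Delta c(1,s,u)$. Equation \eqref{eqn:db dc} is then the $\frkh_1$-homotopy equation \eqref{eq:g-homotopy} (with $s$ playing the role of the path parameter and $u$ that of the homotopy parameter) for the pair $(W,V)$, while the vanishings $V(0,u)=V(1,u)=0$ supply the required boundary conditions. The main obstacle in carrying out this proof is pinpointing the correct definition of $\Psi$: without the correction $-l_2(b,\Delta c)+l_2(c,\Delta b)+x$, the derivative $\pat \Phi$ does not close into an ODE for $\Phi$ alone, and one must read off these correction terms precisely so that the Jacobi identity and \eqref{eq:xyz} conspire to absorb all the extraneous expressions.
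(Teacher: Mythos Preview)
Your proof is correct and follows essentially the same strategy as the paper: both arguments establish that a suitably corrected discrepancy satisfies a linear homogeneous ODE in $t$ with zero initial data, and then invoke uniqueness. The paper packages the computation by setting $\overline{a}=a$, $\overline{b}=b+\Delta b$, $\overline{c}=c+\Delta c$ in the semidirect product $\frkh_0\oplus\frkh_1$ and proving that the quantity $\pae\overline{c}-\pau\overline{b}-[\overline{b},\overline{c}]_{\frkh_0\oplus\frkh_1}-\dM x+x$ satisfies $\pat(\cdot)=[\overline{a},\cdot]_{\frkh_0\oplus\frkh_1}$; your $\Psi$ is precisely the $\frkh_1$-component of this expression (the $\frkh_0$-component vanishes identically by \eqref{eq:cbx}), and your ODE $\pat\Psi=l_2(a,\Psi)$ is the $\frkh_1$-component of theirs. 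Your treatment of \eqref{eqn:dc} via the direct observation $y(t,1,u)=0$ is slightly more streamlined than the paper's route through $\pau\overline{a}|_{s=1}=0$ and \eqref{eq:ca}, but the content is the same.
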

\pf Denote $\widetilde{b}=b+\dM\Delta
b,~\widetilde{c}=c+\dM\Delta c$, by Lemma \ref{lem:homotopy},
we have
\begin{eqnarray}
\label{eqn:b a}\pat\widetilde{b}-\pae a &=&[a,\widetilde{b}]_{\frkh_0},\\
\label{eqn:c
a}\pat\widetilde{c}-\pau a&=&[a,\widetilde{c}]_{\frkh_0}.
\end{eqnarray}

 Denote $\overline{a}=a$, $\overline{b}=b+\Delta b$ and
$\overline{c}=c+\Delta c$. By (\ref{eqn:delta b 1}) and
(\ref{eqn:DGLA morphism c 1}), we have\footnote{See
\eqref{eqn:bracketdirectsum} for the definition
$[\cdot,\cdot]_{\frkh_0\oplus \frkh_1}$}
\begin{eqnarray*}
\pat\overline{b}-\pae\overline{a}
&=&[a,b]_{\frkh_0}+\dM z +l_2(a,\Delta
b)-z\\
&=&[\overline{a},\overline{b}]_{\frkh_0\oplus \frkh_1}+\dM z-z.
\end{eqnarray*}
Similarly, we have
\begin{eqnarray}\label{eq:ca}
\pat\overline{c}-\pau\overline{a}=
[\overline{a},\overline{c}]_{\frkh_0\oplus \frkh_1} +\dM y- y.
\end{eqnarray}
Next we prove that
\begin{equation} \label{eq:bcbb}
\pae\overline{c}-\pau\overline{b}=
[\overline{b},\overline{c}]_{\frkh_0\oplus \frkh_1}+\dM x -x.
\end{equation}

 By straightforward computations, we have
\begin{eqnarray}
\nonumber&&\pat\big(\pae\overline{c}-\pau\overline{b}-
[\overline{b},\overline{c}]_{\frkh_0\oplus \frkh_1}
\big)\\&=&\pae\pat\overline{c}-\pau\pat\overline{b}-[\pat\overline{b},\overline{c}]_{\frkh_0\oplus
\frkh_1}
-[\overline{b},\pat\overline{c}]_{\frkh_0\oplus \frkh_1}\\
\nonumber&=&\pae\big(\pau\overline{a}+
[\overline{a},\overline{c}]_{\frkh_0\oplus \frkh_1} +\dM y -y \big)-[\pat\overline{b},\overline{c}]_{\frkh_0\oplus\frkh_1}\\
\nonumber&&-\pau\big(\pae\overline{a}+
[\overline{a},\overline{b}]_{\frkh_0\oplus\frkh_1}+\dM z-z\big)-[\overline{b},\pat\overline{c}]_{\frkh_0\oplus\frkh_1}\\
\nonumber&=&[\pae\overline{a},\overline{c}]_{\frkh_0\oplus\frkh_1}-[\pat\overline{b},\overline{c}]_{\frkh_0\oplus\frkh_1}
+[\overline{a},\pae\overline{c}]_{\frkh_0\oplus\frkh_1}+\pae(\dM y - y)\\
\nonumber&&-[\pau\overline{a},\overline{b}]_{\frkh_0\oplus\frkh_1}-[\overline{b},\pat\overline{c}]_{\frkh_0\oplus\frkh_1}
-[\overline{a},\pau\overline{b}]_{\frkh_0\oplus\frkh_1}-\pau(\dM z
- z)\\
\nonumber&=&[\overline{c},\dM z
-z]_{\frkh_0\oplus\frkh_1}-[\overline{b},\dM y-y]_{\frkh_0\oplus\frkh_1}
+[\overline{a},\pae\overline{c}-\pau\overline{b}-
[\overline{b},\overline{c}]_{\frkh_0\oplus\frkh_1}]_{\frkh_0\oplus\frkh_1}\\
&&\label{eqn:t4}+\pae(\dM y - y)-\pau(\dM z- z).
\end{eqnarray}
Meanwhile, we have
\begin{eqnarray}
\nonumber[\overline{c},\dM z -z ]_{\frkh_0\oplus\frkh_1}&=&[c+\Delta
c,\dM z-z]_{\frkh_0\oplus\frkh_1}\\\label{eqn:t1}
&=&\dM l_2(c,z)-l_2(c,z),\\\nonumber
~[\overline{b},\dM y- y]_{\frkh_0\oplus\frkh_1}&=&[b+\Delta
b,\dM y-y]_{\frkh_0\oplus\frkh_1}
\\\label{eqn:t2}&=&\dM l_2( b,y)-l_2(b,y),
\end{eqnarray}
\emptycomment{
and
\begin{eqnarray}
\nonumber&&\pae(\dM\nu(a,c)-\nu(a,c))-\pau(\dM\nu(a,b)-\nu(a,b))\\
\nonumber&&=\dM\nu(\pae a,c)+\dM\nu( a,\pae c)-\nu(\pae a,c)-\nu(
a,\pae c)\\
\nonumber&&-\dM\nu(\pau a,b)-\dM\nu( a,\pau b)+\nu(\pau a,b)\nu(
a,\pau
b)\\
\nonumber&&=\dM\nu(\pat
b-[a,b]_\frkg,c)+\dM\nu(a,[b,c]_\frkg)-\dM\nu(\pat
c-[a,c]_\frkg,b)\\
\nonumber&&-\nu(\pat b-[a,b]_\frkg,c)-\nu(a,[b,c]_\frkg)+\nu(\pat c-[a,c]_\frkg,b)\\
&&\label{eqn:t3}=\dM\nu(a,[b,c]_\frkg)+c.p.+\nu([a,b]_\frkg,c)+c.p.+\pat(\dM\nu(b,c)-\nu(b,c)).
\end{eqnarray} }
By (\ref{eqn:t1}), (\ref{eqn:t2}),  (\ref{eqn:t4}) and
\eqref{eq:xyz},
 we obtain that
\begin{equation}\label{eqn:t5}
\pat\big(\pae\overline{c}-\pau\overline{b}-
[\overline{b},\overline{c}]_{\frkh_0\oplus\frkh_1}-\dM x+x\big)=[\overline{a},\pae\overline{c}-\pau\overline{b}-
[\overline{b},\overline{c}]_{\frkh_0\oplus\frkh_1}-\dM x+x]_{\frkh_0\oplus\frkh_1}.
\end{equation}
Since we have
$$
\big(\pae\overline{c}-\pau\overline{b}-
[\overline{b},\overline{c}]_{\frkh_0\oplus\frkh_1}-\dM x+ x\big)|_{t=0}=0,
$$
by  the uniqueness of solutions of ordinary differential equations,
(\ref{eqn:t5}) implies \eqref{eq:bcbb}.

Since $b(1,s,u)=c(1,s,u)=0$, we have
$$
\pae\Delta c(1,s,u)-\pau\Delta b(1,s,u)=[\Delta b(1,s,u),\Delta
c(1,s,u)]_{\frkh_1}.
$$
By the initial value condition  $a(t,1,u)=a(t,1,0),$ for any $u$,
since $\bar{a}=a$, we
have $\pau \bar{a}(t,1,u)=0$. By (\ref{eq:ca}) and $c(t,1,u)=0$, we have
$$
\pat \overline{c}(t,1,u)=[\overline{a}(t,1,u),
\overline{c}(t,1,u)]_{\frkh_0\oplus\frkh_1}.
$$
Since $\overline{c}(0,1,u)=0$, it follows that
$\overline{c}(t,1,u)=0$, which implies that $\Delta c(t,1,u)=0$ and
thus $\Delta c(1,1,u)=0$. Therefore, $\Delta b(1,s,0)$ and
$\Delta b(1,s,1)$ are homotopic. \qed\vspace{3mm}

Morita equivalence are defined for $n$-groupoids in an arbitrary
category with a certain Grothendieck pretopology in \cite{z:tgpd2}.
We adapt this notation to our situation: a morphism $F: C \to C'$ of
Lie 2-group is a {\bf hypercover} and denoted by  $F: C
\xrightarrow{\sim} C'$, if $F_0: C_0\to C'_0$ is a surjective
submersion, and the natural map
\[ C_1 \to C'_1 \times_{ C'_0\times C'_0} C_0 \times C_0 \]
is an isomorphism. Moreover $C$ and $C'$ are {\bf Morita equivalent},
if there is another Lie 2-group $C''$ such that there are hypercovers
 $C\xleftarrow{\sim} C''\xrightarrow{\sim} C'$. A {\bf
generalized morphism} between Lie 2-groups is a span of morphisms
$C\xleftarrow{\sim} C''\xrightarrow{} C'$.

With the above preparations, we have
\begin{thm}\label{thm:main 3} There is a Lie 2-group Morita
  equivalence given by a morphism $(\Psi_0, \Psi_1, \Psi_2=id)$:
\begin{equation}\label{main map}\begin{array}{ccc}
P_2 \frkh /\sim &\stackrel{ \Psi_1}{\longrightarrow} &H_0\ltimes H_1\\
\Big\downarrow\Big\downarrow\vcenter{\rlap{ }}&
&\Big\downarrow\Big\downarrow\vcenter{\rlap{
}}\\
P_1\frkh&\stackrel{\Psi_0}{\longrightarrow}  &H_0,
 \end{array}\end{equation}
where $\Psi_0(a(t))=[a(t)]$, which is the equivalence class of the
path $a(t)$ and $\Psi_1$ is given by \eqref{Psi1}.
\end{thm}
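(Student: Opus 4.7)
The plan is to verify three things in sequence: (i) $(\Psi_0,\Psi_1)$ is a well-defined smooth functor between the groupoids $P_2\frkh/\!\sim\;\rightrightarrows P_1\frkh$ and $H_0\ltimes H_1\rightrightarrows H_0$; (ii) taking $\Psi_2=\mathrm{id}$ upgrades this functor to a Lie 2-group morphism; and (iii) the resulting morphism is a hypercover, which exhibits the desired Morita equivalence via the span with the identity on the other leg.

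For (i), well-definedness of $\Psi_1$ on equivalence classes is exactly what the lemma immediately preceding the theorem establishes. Source/target compatibility is Proposition~\ref{pro:concatenation}: the source of $\Psi_1(a,b,z)$ is $[a_0]=\Psi_0(a(-,0))$, while its target in $H_0\ltimes H_1$ equals $t([\Delta b(1,-)])\cdot[a_0]=[\dM\Delta b(1,-)\odot a_0]=[a_1]=\Psi_0(a(-,1))$. For vertical composition, note that \eqref{eqn:delta b} is first-order in $t$ with initial datum $\Delta b(0,s)=0$, so on a vertical concatenation (in $s$) the solution $\Delta b^\vee(1,-)$ for the composite is precisely the $s$-concatenation of the two factors' $\Delta b_i(1,-)$; this descends to the product in $H_1=P\frkh_1/\!\sim$ and matches \eqref{m v}.

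For (ii), horizontal multiplication is what Lemma~\ref{lem:delta b} together with \eqref{eqn:action Phi} of Lemma~\ref{lem:path action} compute:
\[\Psi_1\bigl((a,b,z)\cdot_\h(a^\dag,b^\dag,z^\dag)\bigr)=\bigl([a_0\odot a_0^\dag],\,[\Delta b(1,-)\odot w(1,-)]\bigr)=\Psi_1(a,b,z)\cdot_\h\Psi_1(a^\dag,b^\dag,z^\dag),\]
so $\Psi_2=\mathrm{id}$ is a coherent 2-isomorphism. The trivial bigon $(0,0,0)$ has $\Delta b\equiv 0$ and is sent to the unit of $H_0\ltimes H_1$. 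Because the target is strict and $\Psi_2=\mathrm{id}$, the pentagon~\eqref{condition F2} and unit~\eqref{condition F2-lr-unit} diagrams collapse to identities on the target side, while on the source side they hold up to $\sim$ by the reparametrization lemma above.

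For (iii), $\Psi_0$ is a surjective submersion by the classical path-space model of the simply connected $H_0$ (cf.~\cite{tz2}). The crux, and the main obstacle, is to show that the canonical map
\[\Theta:\;P_2\frkh/\!\sim\;\longrightarrow\;(H_0\ltimes H_1)\times_{H_0\times H_0}(P_1\frkh\times P_1\frkh),\quad [a,b,z]\mapsto\bigl(([a_0],[\Delta b(1,-)]),(a_0,a_1)\bigr),\]
is a diffeomorphism. For surjectivity of $\Theta$, given $a_0,a_1\in P_1\frkh$ and $(g,h)\in H_0\ltimes H_1$ with $g=[a_0]$ and $t(h)g=[a_1]$, pick $v\in P\frkh_1$ representing $h$; the relation $\dM v\odot a_0\sim a_1$ provides a $\frkh_0$-homotopy which, after a reparametrization putting its second component into the boundary-vanishing form of $\widetilde b$ and reading $z$ off via \eqref{eqn:delta b}, lifts to a triple $(a,b,z)\in P_2\frkh$ with $\Delta b(1,-)=v$, essentially inverting the construction of Lemma~\ref{lem:homotopy}. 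For injectivity, two triples with the same image under $\Theta$ produce a $\frkh_1$-homotopy between their $\Delta b(1,-)$'s and a $\frkh_0$-homotopy between their underlying $a$'s, and these assemble into a cube $(a,b,c,x,y,z)$ satisfying \eqref{eq:abz}--\eqref{eq:xyz} by the same mechanism as in the well-definedness lemma preceding the theorem. Smoothness of $\Theta^{\pm 1}$ then follows from the smooth dependence of ODE solutions on parameters.
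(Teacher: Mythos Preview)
Your treatment of (i) and (ii) matches the paper's argument and is correct. The real content is in (iii), and here your sketch has two genuine gaps.

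First, in the injectivity direction you invoke ``the same mechanism as in the well-definedness lemma preceding the theorem'', but that lemma runs the \emph{opposite} way: it starts from a given cube $(a,b,c,x,y,z)$ satisfying \eqref{eq:abz}--\eqref{eq:xyz} and \emph{deduces} an $\frkh_1$-homotopy between the $\Delta b(1,s,i)$. Injectivity of $\Theta$ requires the converse construction, and this is where the work lies. The paper handles it by building an explicit inverse map $\zeta$ and then proving $\zeta$ is well-defined: given an $\frkh_1$-homotopy $(\Delta b(s,u),\Delta c(s,u))$ of the endpoint data, one must first produce an $\frkh_0$-homotopy between the two underlying $\frkh_0$-bigons $(a^i,\widetilde b^i)$---and this step needs $\pi_2(H_0)=0$, the finite-dimensional hypothesis flagged in the introduction and again in the closing remark. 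You never invoke it. From there one writes down $x$ by an explicit formula and verifies \eqref{eq:cbx} and \eqref{eq:xyz} by a direct computation; none of this is ``the same mechanism'' as the forward lemma.

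Second, your surjectivity sketch says ``reading $z$ off via \eqref{eqn:delta b}'', but \eqref{eqn:delta b} determines $z$ only once a two-variable extension $\Delta b(t,s)$ of the boundary curve $v(s)=\Delta b(1,s)$ has been chosen. The paper makes such a choice explicitly and then needs a separate lemma to show the resulting class in $P_2\frkh/\!\sim$ is independent of that choice; this independence is precisely what gives $\zeta\circ\Theta=\mathrm{id}$ and hence injectivity. Your proposal does not address this point.
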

\begin{rmk} There is an integration obstruction proved in
  \cite{henriques}, that is, the quotient $P_2 \frkh /\sim$ might not
  be representable as a Banach manifold unless a certain obstruction
  class vanishes. In this theorem, we show directly (Prop. \ref{prop:iso})
  that
  $\tau_2(\int \frkh)$ is always representable.
\end{rmk}

We prove it by several steps.

\begin{lem} The above morphism $(\Psi_0, \Psi_1, \Psi_2)$ is a 2-group morphism.
\end{lem}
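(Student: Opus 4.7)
The plan is to verify each piece of the 2-group morphism data of Definition~\ref{defi:2 morphism}, with $\Psi_2$ prescribed to be the identity natural transformation. Because $H_0\ltimes H_1$ is strict, this choice forces $\Psi_1$ to intertwine horizontal multiplication on the nose (after the reparametrization~\eqref{eq:atdot}--\eqref{eq:ztdot}) and makes the three coherence diagrams collapse to trivial identities.

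First I would check that $(\Psi_0,\Psi_1)$ is a smooth functor. For source and target: by~\eqref{m v} the source of $([a_0],[\Delta b(1,\cdot)])$ in $H_0\ltimes H_1$ is $[a_0]=\Psi_0(a(\cdot,0))$, and its target is $[\dM\Delta b(1,\cdot)]\cdot[a_0]=[a(\cdot,1)]=\Psi_0(a(\cdot,1))$, which is precisely Proposition~\ref{pro:concatenation}. For the identity bigon $(a,0,0)$, uniqueness for~\eqref{eqn:delta b} with $z=0$ and $\Delta b(0,s)=0$ forces $\Delta b\equiv 0$, so $\Psi_1$ sends identities to identities. For vertical multiplication I would exploit that~\eqref{eqn:delta b} is a $t$-ODE with $s$ entering only as a parameter; the solution on each slice $s=\text{const}$ depends only on $a(\cdot,s)$ and $z(\cdot,s)$, so $s$-concatenating two bigons $s$-concatenates the resulting boundary traces $\Delta b(1,\cdot)$, matching the rule $(g',h')\cdot_\ve(g,h)=(g,h'\cdot h)$ from~\eqref{m v}.

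Horizontal compatibility is the heart of the matter, and is exactly what Lemmas~\ref{lem:delta b} and~\ref{lem:path action} were set up for. Lemma~\ref{lem:delta b} gives $\Delta b^\ddag(2,s)=\Delta b(1,s)\odot w(1,s)$ for the reparametrized horizontal product, where $w$ solves $\pat w = l_2(a_0,w)$ with $w(0,s)=\Delta b^\dag(1,s)$; Lemma~\ref{lem:path action} then identifies $[w(1,\cdot)] = \Phi_{[a_0]}([\Delta b^\dag(1,\cdot)])$. Comparing with~\eqref{m h} shows that $\Psi_1$ strictly preserves horizontal multiplication after the reparametrization, and since reparametrization does not change the class in $P_2\frkh/\sim$, the choice $\Psi_2=\mathrm{id}$ is legitimate. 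The three coherence squares in Definition~\ref{defi:2 morphism} then reduce to the observation that the associator and left/right unit 2-cells on the source side are $t$-reparametrizations, which pass to identities in $H_0\ltimes H_1$ via $\Psi_1$ (by the reparametrization lemma preceding) and in $H_0$ via $\Psi_0$ (because concatenation of $\frkg$-paths descends to strictly associative multiplication in $H_0$).

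The main subtlety I anticipate is the vertical-composition check: one must justify that $s$-concatenating two bigons and then solving~\eqref{eqn:delta b} with initial condition at $t=0$ produces exactly the $s$-concatenation of the two separately computed solutions. Since the concatenated $a$ is only piecewise smooth in $s$ this is not entirely formal, but because~\eqref{eqn:delta b} differentiates only in $t$, existence and uniqueness apply slice by slice in $s$ and the identification is immediate.
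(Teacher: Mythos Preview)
Your proposal is correct and follows essentially the same route as the paper: source/target compatibility via Proposition~\ref{pro:concatenation}, vertical multiplication by the $s$-slicewise nature of the ODE~\eqref{eqn:delta b}, horizontal multiplication via Lemmas~\ref{lem:delta b} and~\ref{lem:path action}, and the coherence conditions by observing that the associator and unitors are reparametrizations and hence sent to identities. The only point the paper makes explicit that you leave tacit is a closing remark that smoothness of $\Psi_0$ and $\Psi_1$ follows from smooth dependence of the solution of~\eqref{eqn:delta b} on parameters.
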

\pf
 Obviously, $(\Psi_1,\Psi_0)$ respects the source and target
maps. It is not hard to see  that $(\Psi_1,\Psi_0)$ is a morphism
with respect to the vertical multiplication. In fact, for $
\xymatrix@C+2em{
  \bullet &
  \ar@/_1pc/[l]_{a_0}_{}="0"
  \ar@/^1pc/[l]^{a_1}^{}="1"
  \ar@{=>}"0";"1"^{}
  \bullet}$, $
\xymatrix@C+2em{
  \bullet &
  \ar@/_1pc/[l]_{a_1}_{}="0"
  \ar@/^1pc/[l]^{a_2}^{}="1"
  \ar@{=>}"0";"1"^{}
  \bullet}\in P^2\frkh$, assume that $\Delta b, ~\Delta b^\sharp$ are the corresponding solutions of (\ref{eqn:delta b}) respectively. By definition, we have
$$
\Psi_1(\xymatrix@C+2em{
  \bullet &
  \ar@/_1pc/[l]_{a_1}_{}="0"
  \ar@/^1pc/[l]^{a_2}^{}="1"
  \ar@{=>}"0";"1"^{}
  \bullet})\cdot_\ve\Psi_1(\xymatrix@C+2em{
  \bullet &
  \ar@/_1pc/[l]_{a_0}_{}="0"
  \ar@/^1pc/[l]^{a_1}^{}="1"
  \ar@{=>}"0";"1"^{}
  \bullet})=([a_0],[\Delta b^\sharp(1,s)]\cdot
[\Delta b(1,s)]).
$$
On the other hand, it is straightforward to see that $\Delta
b^\sharp\odot \Delta b$ is the solution of (\ref{eqn:delta b}) for
the bigon $\xymatrix@C+2em{
   \bullet & \bullet
    \ar@/_1.5pc/[l]_-{a_0}^{}="0"
    \ar[l]_{a_1\qquad}_{}="1"
    \ar@/^1.5pc/[l]^-{a_2}_{}="2"
    \ar@{=>} "0";"1"^{}
    \ar@{=>} "1";"2"^{}
  }$. Therefore, we have
  $
\Psi_1(\xymatrix@C+2em{
   \bullet & \bullet
    \ar@/_1.5pc/[l]_-{a_0}^{}="0"
    \ar[l]_{a_1\qquad}_{}="1"
    \ar@/^1.5pc/[l]^-{a_2}_{}="2"
    \ar@{=>} "0";"1"^{}
    \ar@{=>} "1";"2"^{}
  })=([a_0],[\Delta b^\sharp(1,s)\odot
\Delta b(1,s)]),
  $
which implies that
\begin{eqnarray*}
\Psi_1(\xymatrix@C+2em{
   \bullet & \bullet
    \ar@/_1.5pc/[l]_-{a_0}^{}="0"
    \ar[l]_{a_1\qquad}_{}="1"
    \ar@/^1.5pc/[l]^-{a_2}_{}="2"
    \ar@{=>} "0";"1"^{}
    \ar@{=>} "1";"2"^{}
  })&=&\Psi_1(\xymatrix@C+2em{
  \bullet &
  \ar@/_1pc/[l]_{a_1}_{}="0"
  \ar@/^1pc/[l]^{a_2}^{}="1"
  \ar@{=>}"0";"1"^{}
  \bullet})\cdot_\ve\Psi_1(\xymatrix@C+2em{
  \bullet &
  \ar@/_1pc/[l]_{a_0}_{}="0"
  \ar@/^1pc/[l]^{a_1}^{}="1"
  \ar@{=>}"0";"1"^{}
  \bullet}).\end{eqnarray*}

Next we prove that $(\Psi_1,\Psi_0)$ is also a morphism with respect
to the horizontal multiplication.  By (\ref{m h}), we have
\begin{eqnarray*}
\Psi_1(\xymatrix@C+2em{
 \bullet &
  \ar@/_1pc/[l]_{a_0}_{}="0"
  \ar@/^1pc/[l]^{a_1}^{}="1"
  \ar@{=>}"0";"1"^{}
  \bullet })\cdot_\h\Psi_1( \xymatrix@C+2em{\bullet &
  \ar@/_1pc/[l]_{a_0^\dag}_{}="2"
  \ar@/^1pc/[l]^{a_1^\dag}^{}="3"
  \ar@{=>}"2";"3"^{}
  \bullet})&=&([a_0],[\Delta b(1,s)])\cdot_\h([a_0^\dag],[\Delta b^\dag(1,s)])\\
&=&\Big([a_0\odot a_0^\dag],[\Delta
b(1,s)]\cdot\Phi_{[a_0]}\big([\Delta b^\dag(1,s)]\big)\Big),
\end{eqnarray*}
where $\Phi$ is given by (\ref{eqn:action Phi}) which integrates the
action of $\frkh_0$ on $\frkh_1.$

 On the other hand, by Lemma \ref{lem:delta b} and Lemma
\ref{lem:path action}, we have
\begin{eqnarray*}
\Psi_1(\xymatrix@C+2em{
 \bullet &
  \ar@/_1pc/[l]_{a_0}_{}="0"
  \ar@/^1pc/[l]^{a_1}^{}="1"
  \ar@{=>}"0";"1"^{}
  \bullet &
  \ar@/_1pc/[l]_{a_0^\dag}_{}="2"
  \ar@/^1pc/[l]^{a_1^\dag}^{}="3"
  \ar@{=>}"2";"3"^{}
  \bullet}
)&=&\Big([a_0\odot a_0^\dag],[\Delta
b^\ddag(2,s)]\Big)\\&=&\Big([a_0\odot a_0^\dag],[\Delta b(1,s)\odot
w(1,s)]\Big)\\&=&\Big([a_0\odot a_0^\dag],[\Delta
b(1,s)]\cdot\Phi_{[a_0]}\big([\Delta b^\dag(1,s)]\big)\Big),
\end{eqnarray*}
 which implies that
\begin{eqnarray*}
\Psi_1(\xymatrix@C+2em{
 \bullet &
  \ar@/_1pc/[l]_{a_0}_{}="0"
  \ar@/^1pc/[l]^{a_1}^{}="1"
  \ar@{=>}"0";"1"^{}
  \bullet &
  \ar@/_1pc/[l]_{a_0^\dag}_{}="2"
  \ar@/^1pc/[l]^{a_1^\dag}^{}="3"
  \ar@{=>}"2";"3"^{}
  \bullet}
)=\Psi_1(\xymatrix@C+2em{
 \bullet &
  \ar@/_1pc/[l]_{a_0}_{}="0"
  \ar@/^1pc/[l]^{a_1}^{}="1"
  \ar@{=>}"0";"1"^{}
  \bullet })\cdot_\h\Psi_1( \xymatrix@C+2em{\bullet &
  \ar@/_1pc/[l]_{a_0^\dag}_{}="2"
  \ar@/^1pc/[l]^{a_1^\dag}^{}="3"
  \ar@{=>}"2";"3"^{}
  \bullet}),
\end{eqnarray*} i.e. $\Psi_1$ is a morphism with respect to the
horizontal multiplication.

Finally, since the right hand side of \eqref{main map} is a strict
2-group and $(\Psi_1,\Psi_0)$ preserves the horizontal multiplication
strictly,  condition \eqref{condition F2} reduces to
$$
\Psi_1(a_{a_3,a_2,a_1})=\big([(a_3\odot a_2)\odot a_1],1_{H_1}\big).
$$
This holds obviously because $a_{a_3,a_2,a_1} $ being a
reparametrization between $(a_3\odot a_2)\odot a_1$ and
$a_3\odot(a_2\odot a_1)$ must be a homotopy by \cite[Lemma
1.5]{crainic}. Similarly, condition \eqref{condition F2-lr-unit}
holds.

It is clear that $\Psi_0$ sends any smooth path in $P\frkh$ to a
smooth path in $H_0$. Moreover, for a smooth family of homotopies
(parametrized by $u$) $A^u=a^u(t, s) dt+ b^u(t,s)ds$, the solution
$\Delta b^u(t, s)$ of \eqref{eqn:delta b} depends smoothly on $u$.
Thus, both $\Phi_0$ and $\Phi_1$ are smooth.
 \qed

\begin{pro} \label{prop:iso} The natural map \[\varpi: P_2 \frkh /\sim \to H_0 \times H_1 \times_{H_0 \times H_0}
  P_1 \frkh \times P_1 \frkh,   \quad [(a, b, z)] \mapsto \big([a(-,0)],
  [\Delta b], a(-, 0), a(-, 1) \big) \] is an isomorphism.
\end{pro}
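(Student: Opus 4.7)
The plan is to check well-definedness, verify the image lands in the stated fiber product, and then prove bijectivity; the smoothness of the inverse will then upgrade this to a diffeomorphism and, as a consequence, establish the representability of $P_2 \frkh/\sim$ claimed in the remark.

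\textbf{Well-definedness and target.} That $[\Delta b(1, s)] \in H_1$ depends only on the class $[(a, b, z)]$ is exactly the content of the lemma just proved, which showed $\Delta b(1, s, 0) \sim_{\frkh_1} \Delta b(1, s, 1)$ whenever $(a^0, b^0, z^0) \sim (a^1, b^1, z^1)$. The fiber product condition $t([\Delta b(1, s)]) \cdot [a(-, 0)] = [a(-, 1)]$ in $H_0$ is immediate from Proposition \ref{pro:concatenation}, together with the observation that since $t: H_1 \to H_0$ integrates $\dM: \frkh_1 \to \frkh_0$, one has $t([v]) = [\dM v]$ for any $v \in P^0\frkh_1$.

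\textbf{Surjectivity.} Given an element $([a_0], [w], a_0, a_1)$ of the fiber product, the condition $t([w])\cdot[a_0] = [a_1]$ reads $[\dM w \odot a_0] = [a_1]$ in $H_0$. Since $H_0$ is simply connected, we can construct a smooth $g:[0,1]^{\times 2} \to H_0$ with $g(0, s) = e$, $\pat g(t, 0)g(t,0)^{-1} = a_0(t)$, $\pat g(t, 1) g(t, 1)^{-1} = a_1(t)$, and $g(1, s) = \gamma(s)\cdot g(1, 0)$, where $\gamma$ is the $H_0$-path integrating $\dM w$; the prescribed boundary loop is null-homotopic precisely because $\pi_1(H_0) = 0$. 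Set $a = \pat g \cdot g^{-1}$ and $\widetilde b = \pae g \cdot g^{-1}$, so that $\pat \widetilde b - \pae a = [a, \widetilde b]_{\frkh_0}$, $\widetilde b(0, s) = 0$, and $\widetilde b(1, s) = \dM w(s)$. Choose any smooth $\Delta b : [0,1]^{\times 2} \to \frkh_1$ with boundary data $\Delta b(0, s) = 0$, $\Delta b(1, s) = w(s)$, $\pat \Delta b(0, s) = 0$, $\pat \Delta b(1, s) = l_2(a(1, s), w(s))$ (for example, a cubic Hermite interpolation in $t$), and define
\[
b := \widetilde b - \dM \Delta b, \qquad z := l_2(a, \Delta b) - \pat \Delta b.
\]
A direct computation using the crossed module identity $\dM l_2(a, m) = [a, \dM m]_{\frkh_0}$ verifies $\pat b - \pae a = l_2(a, b) + \dM z$, and the chosen boundary data of $\Delta b$ yields $b(0, s) = b(1, s) = z(0, s) = z(1, s) = 0$. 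Since this $\Delta b$ is by construction the unique solution of \eqref{eqn:delta b}, we have $\varpi([(a, b, z)]) = ([a_0], [w], a_0, a_1)$.

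\textbf{Injectivity.} Suppose two bigons $(a^i, b^i, z^i)$, $i = 0, 1$, share the same image; then $a^0(-, 0) = a^1(-, 0)$, $a^0(-, 1) = a^1(-, 1)$, and there is an $\frkh_1$-homotopy $V(s, u)$ between $\Delta b^0(1, s)$ and $\Delta b^1(1, s)$. We then run the surjectivity construction parametrically in $u$: assemble a 3-dimensional surface $g(t, s, u)$ in $H_0$ whose $u = 0, 1$ slices are the surfaces coming from the two bigons, whose right edge $g(1, s, u)$ integrates $\dM V(s, u)$, and whose other faces are fixed by the common boundary data. Such a $g$ exists because the prescribed 2-sphere of boundary data is null-homotopic: this is the point at which $\pi_2(H_0) = 0$, and hence the finite-dimensional hypothesis flagged in the introduction, is used. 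Differentiating $g$ and producing the analogue of $\Delta b$ in three parameters yields $(a, b, c, x, y, z)$ realizing $(a^0, b^0, z^0) \sim (a^1, b^1, z^1)$ via \eqref{eq:abz}--\eqref{eq:xyz}, by the same crossed-module computation as in the two-dimensional case.

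\textbf{Smoothness and representability.} The map $(g, h) \mapsto (g, t(h)g)$ is a submersion $H_0 \times H_1 \to H_0 \times H_0$, so the fiber product on the right is a Banach manifold. All ODE integrations, Hermite interpolations, and extensions to 3-cells in the construction depend smoothly on their input, so the inverse of $\varpi$ is smooth; hence $\varpi$ is a diffeomorphism, and $P_2\frkh/\sim$ inherits a Banach manifold structure from the right-hand side. The main technical obstacle is the lifting step: producing the $H_0$-valued surface (or 3-cube) with the correct prescribed boundary, which is where simple connectedness and vanishing of $\pi_2$ are used; everything else reduces to direct computation with the crossed-module identities and careful bookkeeping of boundary conditions.
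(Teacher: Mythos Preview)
Your overall strategy coincides with the paper's: build an inverse to $\varpi$ by choosing an $\frkh_0$-homotopy $(a,\widetilde b)$, extending the $\frkh_1$-path $\Delta b(s)$ over the square with the boundary conditions \eqref{eq:extend-db}, and then setting $b=\widetilde b-\dM\Delta b$, $z=l_2(a,\Delta b)-\pat\Delta b$. Your surjectivity paragraph is correct, and your observation that the chosen $\Delta b$ is automatically the ODE solution for the $z$ you just defined is exactly why $\varpi\circ\zeta=\Id$ is immediate.

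The gap is in injectivity. When you ``run the surjectivity construction parametrically in $u$'', the $u=0,1$ slices of your three-parameter family need to reproduce the \emph{original} bigons $(a^i,b^i,z^i)$, not merely some bigons with the same $\varpi$-image. Since $b^i=\widetilde b^i-\dM\Delta b^i$ and $z^i$ are determined by the original ODE solutions $\Delta b^i(t,s)$, your three-parameter extension $\Delta b(t,s,u)$ must be chosen to agree with those ODE solutions at $u=0,1$; a generic Hermite extension in $t$ will not do this. The paper separates this issue out as Lemma~\ref{lem:extend-db}: any two extensions of $\Delta b(1,s)$ with the same boundary values give $\sim$-equivalent bigons, via the linear interpolation $\Delta b^u=u\Delta b+(1-u)\Delta b^1$ and the explicit homotopy $c=y=0$, $x=\pau\Delta b^u$.

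Second, even granting the correct boundary matching, you still have to produce $c,x,y$ and verify \eqref{eq:xyz}. You write ``by the same crossed-module computation as in the two-dimensional case'', but there is no two-dimensional analogue of \eqref{eq:xyz}; this is genuinely the hardest step. The paper (Lemma~\ref{lem:g-well-define}) constructs $y$ and $c$ by the same recipe as $z$ and $b$, then takes
\[
x=l_2(\Delta b,c)+l_2(b,\Delta c)+\pau\Delta b-\pae\Delta c+l_2(\dM\Delta b,\Delta c),
\]
and spends a full page checking \eqref{eq:xyz} by direct expansion and repeated use of the Jacobi identity for $l_2$. Your sketch does not indicate what $x$ should be or why \eqref{eq:xyz} holds, and this is where the actual work lies.
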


We first remark that $P_1 \frkh = P \frkh_0$ and $H_0$ is a quotient
of  $P \frkh$,
thus $\Psi_0:
P_1\frkh \to H_0$ is a surjective submersion of Banach manifolds. Thus
this proposition will automatically imply that $P_2 \frkh/\sim$ is
representible and hence $\tau_2(\int \frkh)$ is a Lie 2-group. The
morphism we demonstrate in last lemma will further be a Lie
2-group morphism.

Now we prove this lemma by constructing an inverse morphism. We notice
that the Lie group $H_1=P^0\frkh_1/\sim$.  Given an
element $\big(h_0,
  h_1, a_0, a_1 \big)$ on the left hand side,  we take a
  representative $\Delta b(s)  \in P^0 \frkh_1$ of $h_1$, then there are
 $a(t, s)$, $\tb(t, s)$ satisfying $\frkh_0$-homotopy equation \eqref{eq:atb} and the
 boundary conditions as in Lemma \ref{lem:homotopy}.  We extend $\Delta b (s)$ to a
  morphism $\Delta b (t, s): [0,1]^{\times 2} \to \frkh_1$ such that
 \begin{equation}\label{eq:extend-db}\Delta b(1, s) = \Delta b(s), \quad \Delta b(0,s)=0, \quad \pat|_{t=0} \Delta
 b(t, s) =0, \quad \pat|_{t=1} \Delta b(t, s) = l_2(a(1, s),
 \Delta(s)).\end{equation}
Such extension always exists. For example, we take
\begin{equation}\label{eq:ext-explicit}
\Delta b(t, s) = \alpha(t) l_2(a(1, s) , \Delta b(s))+
\beta(t) \Delta b(s),
\end{equation} with
$\alpha(0)=\alpha(1)=\beta(0)=\alpha'(0)=\beta'(0)=\beta'(1)=0$, and
$\alpha'(1)=\beta(1)=1$.
 We take \[z(t, s):=l_2(a(t, s), \Delta b(t,
s)) -\pat \Delta b(t, s), \quad b:= \tb - \dM \Delta b. \] Then
\[\pat b - \pae a -l_2(a,b)- \dM z =0,\quad  z(0, s)=z(1,s)=0, \quad
b(1, s)=b(0, s)=\Delta b(0, s)=0,  \] by construction. Thus we may
define a map
\[\zeta: H_0 \times H_1 \times_{H_0 \times H_0}
  P_1 \frkh \times P_1 \frkh \to P_2 \frkh /\sim ,   \quad \big([a_0],
  [\Delta b], a_0, a_1 \big)  \mapsto [(a, b, z)].  \]

\begin{lem} \label{lem:g-well-define}
The map $\zeta$ is well defined.
\end{lem}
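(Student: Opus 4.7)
The plan is to verify that the class $[(a, b, z)] \in P_2\frkh/\sim$ produced by the construction of $\zeta$ is independent of each of the three choices involved, namely: (i) the representative $\Delta b(s) \in P^0 \frkh_1$ of $h_1 \in H_1$, (ii) the $\frkh_0$-homotopy $(a(t,s), \tb(t,s))$ of Lemma \ref{lem:homotopy} realizing the concatenation equivalence of Proposition \ref{pro:concatenation}, and (iii) the extension $\Delta b(t,s)$ satisfying \eqref{eq:extend-db}. For each choice I will exhibit explicit $3$D data $(a,b,c,x,y,z)$ satisfying \eqref{eq:abz}--\eqref{eq:xyz} with the prescribed boundary behavior.

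I would begin with (iii), which is the most local. Given two extensions $\Delta b^0(t,s),\Delta b^1(t,s)$ of the same $\Delta b(s)$, keep $a$ and $\tb$ fixed and linearly interpolate $\Delta b(t,s,u):=(1-u)\Delta b^0(t,s)+u\Delta b^1(t,s)$. Setting $a(t,s,u)=a(t,s)$ (so that $\pau a=0$, and one may take $c=0$, $y=0$), the formulas $b(t,s,u):=\tb(t,s)-\dM\Delta b(t,s,u)$, $z(t,s,u):=l_2(a,\Delta b(t,s,u))-\pat\Delta b(t,s,u)$, together with $x(t,s,u):=-\pau\Delta b(t,s,u)$, satisfy \eqref{eq:abz}--\eqref{eq:xyz} by direct differentiation, and the boundary conditions in $t=0,1$ and $s=0,1$ follow from \eqref{eq:extend-db}.

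For (ii), with $\Delta b(s)$ fixed, two $\frkh_0$-homotopies $(a^0,\tb^0)$ and $(a^1,\tb^1)$ produced by Lemma \ref{lem:homotopy} integrate to two families $g^0,g^1$ in $H_0$ with identical boundary data along $t=0,1$; since $H_0$ is finite-dimensional simply-connected and $\pi_2(H_0)=0$, the two families are homotopic in $H_0$, and this homotopy lifts to an $\frkh_0$-homotopy $a(t,s,u)$ with the corresponding $\tb(t,s,u)$ given by the usual Lie algebroid morphism condition. One then transports the extension of $\Delta b(s)$ in parallel and derives $b,z,c,x,y$ using the same formulas as in case (iii); the verification of \eqref{eq:xyz} is the technical core, and relies on differentiating the defining identity $z=l_2(a,\Delta b)-\pat\Delta b$ against $u$ and $s$, applying the Jacobi identity for $l_2$ and the fact that $\dM$ is a crossed-module differential.

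Finally, for (i), two representatives $\Delta b^0(s)$ and $\Delta b^1(s)$ of the same element $h_1 \in H_1$ are $\frkh_1$-homotopic, so there exist $\Delta b(s,u),\Delta c(s,u):[0,1]^{\times 2}\to\frkh_1$ obeying the $\frkh_1$-homotopy equation with boundary conditions $\Delta b(0,u)=\Delta b(1,u)=\Delta c(0,u)=\Delta c(1,u)=0$. Extending $\Delta b(s,u)$ to $\Delta b(t,s,u)$ via the explicit ansatz \eqref{eq:ext-explicit} (applied fiberwise in $u$), and correspondingly extending $\Delta c$, yields the 3D datum $(a,b,c,x,y,z)$, again by the same recipe. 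The expected main obstacle is verifying \eqref{eq:xyz} in case (ii) where both $a$ and $\Delta b$ vary in $u$ simultaneously; once the extensions are arranged so that $c$ and $\Delta c$ satisfy the analogue of \eqref{eq:atb} in the $u$-direction, the cocycle condition \eqref{eq:xyz} reduces to an identity of the same shape as those already verified in the proof of Lemma between \eqref{eqn:t1} and \eqref{eqn:t5}, and closes by the uniqueness of solutions to the relevant ODE.
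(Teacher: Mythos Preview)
Your decomposition into the three independent choices is sound, and your treatment of (iii) is essentially the argument the paper gives separately as Lemma~\ref{lem:extend-db} (note however that the correct sign is $x=+\pau\Delta b$, not $-\pau\Delta b$: from $b=\tb-\dM\Delta b$ with $\tb$ constant in $u$ one gets $-\pau b=\dM\pau\Delta b$, so \eqref{eq:cbx} with $c=0$ forces $x=\pau\Delta b$).

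The genuine gap is in (i) and (ii). You never write down what $x$ should be once $c$ and $\Delta c$ are nonzero, and the phrase ``the same formulas as in case (iii)'' cannot be right: in (iii) you had $c=y=0$, whereas now $\tc$ and hence $c$ are forced to be nonzero. The paper, which treats (i) and (ii) simultaneously, constructs $c=\tc-\dM\Delta c$ and $y=l_2(a,\Delta c)-\pat\Delta c$ by the analogue of the $z$-recipe, and then takes
\[
x \;=\; l_2(\Delta b,c)+l_2(b,\Delta c)+\pau\Delta b-\pae\Delta c+l_2(\dM\Delta b,\Delta c).
\]
This choice is what makes \eqref{eq:cbx} hold (via $\pau\tb-\pae\tc=[\tc,\tb]_{\frkh_0}$) and, crucially, what makes \eqref{eq:xyz} go through. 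The verification of \eqref{eq:xyz} is then a direct algebraic computation: one expands $\pau z-\pae y+\pat x$ and $l_2(a,x)-l_2(b,y)+l_2(c,z)$, and the difference collapses to zero by two applications of the Jacobi identity for $l_2$. It does \emph{not} close by an ODE-uniqueness argument of the type between \eqref{eqn:t1} and \eqref{eqn:t5}; that earlier argument established \eqref{eq:bcbb} (the analogue of \eqref{eq:cbx}), not the higher identity \eqref{eq:xyz}, and there is no obvious way to rewrite \eqref{eq:xyz} as a first-order ODE in $t$ since $\pat x$ already appears on the left. So your last sentence points to the wrong mechanism, and without the explicit formula for $x$ above the proof cannot be completed.
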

\pf If we take another representative $\Delta b^1 \in P^0\frkh_1$ which is
equivalent to $\Delta b$ in $P^0\frkh_1$ via $\Delta b(s, u)$ and $\Delta c(s, u)$,
that is
\begin{equation}\label{eq:dcdb} \pae \Delta c - \pau \Delta b = [\Delta b, \Delta c]_{\frkh_1},
\quad \Delta c(0, u) = \Delta c(1, u) = \Delta b(0, u)= \Delta b(1,
u)=0,
\end{equation} and $a_1 \sim \dM (\Delta b) \odot a_0$ via $a(t,
s)$, $\tb(t, s)$ and $a_1 \sim \dM (\Delta b^1) \odot a_0$ via
$a^1(t, s)$, $\tb^1(t, s)$. Since $\pi_2(H_0)=0$, there is no higher
obstruction between $\frkh_0$-homotopies from being homotopic, so
$(a(t, s)$, $\tb(t, s))$ and  $(a^1(t, s)$, $\tb^1(t, s))$ must be
homotopic via a certain homotopy \[a(t, s, u), \quad \tb(t,s,u),
\quad \tc(t,s,u) \quad \in \frkh_0. \] with boundary conditions
\begin{eqnarray*}a(t, s, 0)&=&a(t, s),\quad a(t, s, 1)=a^1(t, s),\\
\tb(t, s, 0)&=&\tb(t, s),\quad \tb(t, s, 1)=\tb^1(t, s),\quad \tb(0,
s, u)=0,\quad \tb(1, s, u)=\dM (\Delta b(s, u)),\\
\tc (0, s, u) &=&0,\quad \tc(1, s, u)=\dM (\Delta c(s, u)),\quad
\tc(t, 0, u)=0,\quad\tc(t, 1, u)=0.\end{eqnarray*}

Now we repeat the construction of $z(t, s)$ for each $u$, and
we obtain $z(t, s, u)$ and $b(t, s, u)$ with correct boundary
conditions  satisfying \eqref{eq:abz}. We need to show that $(a, b,
z)|_{u=0} \sim (a, b, z)|_{u=1}$.

Firstly, by a similar method, we construct
$y(t, s, u)$ and $c(t, s, u)$ with  correct boundary
conditions\footnote{This amounts to extend $\Delta c (s, u)$ to
  $\Delta c(t, s, u)$ such that $\Delta c(t, s, u)|_{s=0, 1} =0$,
  $\Delta c(0, s, u)=0$, $\Delta c(1, s, u) = \Delta c(s, u)$. The
  boundary condition is a bit different than the case of $\Delta b$,
  however with more information $\Delta c|_{s=0, 1}(s, u)=0$ that
  $\Delta c$ has than $\Delta b$, the same construction of extension works.   } and
satisfying \eqref{eq:cay}. Then $\pau \tb - \pae \tc = [\tc, \tb]_{\frkh_0}$ implies that if we
take
\[ x(t, s, u)= l_2( \Delta b,c) + l_2(b, \Delta c) + \pau \Delta b -
\pae \Delta c +l_2( \dM (\Delta b),\Delta c),  \] we will have
\eqref{eq:cbx}. The boundary condition $x(0, s, u)=0$ is obvious.
The boundary condition $x(1, s, u)=0$ is implied by \eqref{eq:dcdb}. Implied by the boundary condition in \eqref{eq:dcdb}, the extension  $\Delta b(t, s, u)$ from $\Delta
b( s, u)$  according to
\eqref{eq:ext-explicit} for each $u$ satisfies  $\Delta b(t, s,
u)|_{s=0,1}=0$.
This implies the boundary condition $x|_{s=0,1}=0 $.

By straightforward computations, we have
\begin{eqnarray*}
  &&\pau z-\pae y+\pat x\\
  &=&\pau\big(l_2(a,\db)-\pat\db\big)-\pae\big(l_2(a,\dc)-\pat\dc\big)\\
  &&+\pat\big(l_2( \Delta b,c) + l_2(b, \Delta c) + \pau \Delta b -
\pae \Delta c +l_2( \dM (\Delta b),\Delta c)\big)\\
&=&l_2(\pau a,\db)+l_2(a,\pau\db)-\pau\pat\db-l_2(\pae
a,\dc)-l_2(a,\pae\dc)+\pae\pat\dc\\
&&+l_2( \pat\Delta b,c)+l_2( \Delta b,\pat c)+l_2(\pat b, \Delta
c)+l_2(b, \pat\Delta c)+\pat\pau\db-\pat\pae\dc\\
&&+l_2( \pat\dM (\Delta b),\Delta c)+l_2( \dM (\Delta b),\pat\Delta
c)\\
&=&l_2(\pau a-\pat c,\db)+l_2(a,\pau\db-\pae\dc)+l_2(\pat b-\pae
a,\dc)\\
&&+l_2( \pat\Delta b,c)+l_2(b, \pat\Delta c) +l_2( \pat\dM (\Delta
b),\Delta c)+l_2( \dM (\Delta b),\pat\Delta c)\\
&=&-l_2(l_2(a,c)+\dM y,\db)+l_2(a,\pau\db-\pae\dc)+l_2(l_2(a,b)+\dM
z,\dc)\\&&+l_2( \pat\Delta b,c)+l_2(b, \pat\Delta c) +l_2( \pat\dM
(\Delta b),\Delta c)+l_2( \dM (\Delta b),\pat\Delta c),
\end{eqnarray*}
and\begin{eqnarray*}
&&l_2(a,x)-l_2(b,y)+l_2(c,z)\\
&=&l_2(a,l_2( \Delta b,c) + l_2(b, \Delta c) + \pau \Delta b - \pae
\Delta c +l_2( \dM (\Delta b),\Delta c))\\
&&-l_2(b,l_2(a,\dc)-\pat\dc)+l_2(c,l_2(a,\db)-\pat\db).
\end{eqnarray*}
Since $l_2$ satisfies the Jacobi identity, the condition
\eqref{eq:xyz} is equivalent to
$$
-l_2(\dM y,\db)+l_2(\dM z,\dc)+l_2( \pat\dM (\Delta b),\Delta
c)+l_2( \dM (\Delta b),\pat\Delta c)-l_2(a,l_2(\dM\db,\dc))=0.
$$
Compute directly, the left hand side is equal to
\begin{eqnarray*}
 && -l_2(\dM l_2(a,\dc),\db)+l_2(\dM\pat\dc,\db)+l_2(\dM l_2(a,\db),\dc)-l_2(\dM\pat\db,\dc)\\
  &&+l_2( \pat\dM (\Delta b),\Delta
c)+l_2( \dM (\Delta b),\pat\Delta c)-l_2(a,l_2(\dM\db,\dc)),
\end{eqnarray*}
which is equal to zero since $l_2$ satisfies the Jacobi identity.
 Thus, \eqref{eq:xyz} holds.
Therefore, we have $(a, b, z)|_{u=0} \sim (a, b, z)|_{u=1}$ through
$(a, b, c, x, y, z)$. \qed\vspace{3mm}

It is obvious that $\varpi \circ \zeta= id$. To finish the proof of
Theorem \ref{thm:main 3}, we still need to show that $\zeta\circ
\varpi=id$. Given an element $(a, b, z)\in P_2 \frkh$, since
$\varpi$ does not depend on the choice of representative, we choose
a convenient reparametrization such that $z(t, s)|_{s=0,1}=0$. Thus
the solution $\Delta b$ in Lemma \ref{lem:homotopy} also has $\Delta
b(t, s)|_{s=0,1}=0$. Following $\varpi$ then $\zeta$, we first
restrict $\Delta b$ on $t=1$, then extend it again to all $t$ by
\eqref{eq:ext-explicit}, thus we might end up with another $\Delta
b^1$, with the same boundary value, that is when either $t$ or $s$
is $0$ or $1$. Thus Theorem \ref{thm:main 3} follows immediately
from the following lemma:

\begin{lem}\label{lem:extend-db} The map $\zeta$ does not depend on the
  choice of extension with the same boundary value.
\end{lem}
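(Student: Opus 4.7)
My plan is to construct the equivalence data in $P_2\frkh$ explicitly by interpolating between the two extensions.

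Let $\Delta b^0(t, s)$ and $\Delta b^1(t, s)$ be two extensions of the same boundary values appearing in \eqref{eq:extend-db}, and write $(a, b^i, z^i)$ for the two resulting triples in $P_2\frkh$ via the construction in Lemma \ref{lem:g-well-define} (note that $a$ and $\tb$ are produced from $\Delta b(s)$ independently of the extension, so they are the same in both cases). The plan is to produce the six-tuple $(a, b, c, x, y, z)$ witnessing $(a, b^0, z^0) \sim (a, b^1, z^1)$ by taking the linear interpolation
\[ \Delta b(t, s, u) := (1-u)\,\Delta b^0(t, s) + u\,\Delta b^1(t, s), \]
which automatically preserves the boundary conditions in \eqref{eq:extend-db} as well as $\Delta b|_{s=0,1}=0$.

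Set $a(t, s, u) := a(t, s)$ (independent of $u$), $b(t, s, u) := \tb(t, s) - \dM \Delta b(t, s, u)$, and $z(t, s, u) := l_2(a, \Delta b(t, s, u)) - \pat \Delta b(t, s, u)$, so that for each fixed $u$ the triple $(a, b, z)$ lies in $P_2\frkh$ by the same calculation that proves equation \eqref{eq:abz} in Lemma \ref{lem:g-well-define}, interpolating the given triples at $u=0,1$. For the remaining equivalence data I would take the trivial choice on the $\frkh_1$-side adapted to the fact that no motion in $H_1$ is required:
\[ c(t, s, u) := 0, \qquad y(t, s, u) := 0, \qquad x(t, s, u) := \pau \Delta b(t, s, u). \]

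It then remains to verify the four equations \eqref{eq:abz}--\eqref{eq:xyz} and all the boundary conditions. Equation \eqref{eq:cay} is immediate since $\pau a = 0$ and $c = y = 0$; equation \eqref{eq:cbx} reduces to $\dM \pau \Delta b = \dM x$, which holds by definition of $x$; and equation \eqref{eq:xyz} reduces to
\[ \pau z + \pat\pau \Delta b = l_2(a, \pau \Delta b), \]
which follows at once from $\pau z = l_2(a, \pau \Delta b) - \pat \pau \Delta b$ since $a$ is independent of $u$. The boundary conditions $x|_{t=0,1} = 0$ and $x|_{s=0,1}=0$ come directly from the matching boundary values of $\Delta b^0, \Delta b^1$ imposed in \eqref{eq:extend-db}, and $z|_{t=0,1}=0$ is inherited from the same computation as in Lemma \ref{lem:g-well-define}. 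The main (and only) potential obstacle is making sure none of the boundary data of $x$ is forgotten; this is the reason I use the \emph{linear} interpolation and exploit that the prescribed boundary values of the two extensions coincide, so that $\pau \Delta b$ vanishes on the entire boundary of the unit square in $(t,s)$.
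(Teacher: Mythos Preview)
Your proof is correct and essentially identical to the paper's: the paper likewise takes the linear interpolation $\Delta b^u = u\,\Delta b + (1-u)\,\Delta b^1$, sets $a$ constant in $u$, $c=y=0$, $x=\pau\Delta b^u$, and verifies \eqref{eq:abz}--\eqref{eq:xyz} and the boundary conditions of $x$ exactly as you do. The only minor remark is that for $x|_{s=0,1}=0$ you do not actually need $\Delta b^i|_{s=0,1}=0$, only that the two extensions agree there (which is the hypothesis ``same boundary value''); but this does not affect the argument.
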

\begin{proof}
We suppose that there are two such  extensions $\Delta b(t, s)$ and
$\Delta b^1(t, s)$. We connect them by $\Delta b^u:= u \Delta b +
(1-u) \Delta b^1 $.  Then  the corresponding $b:= \tb - \dM \Delta
b$ and $b^1:= \tb - \dM \Delta b^1$ are connected by $b^u := \tb -
\Delta b^u$; the corresponding $z$ and $z^1$ are connected by
$z^u:=l_2(a, \Delta b^u) - \pat \Delta b^u $. Now we
 take $a(t, s, u)= a(t, s)$, $b(t, s, u)= b^u$,
$c=0, y=0, x=\pau \Delta b^u$, then it is obviously to see that
\eqref{eq:abz}, \eqref{eq:cay} hold. Equation \eqref{eq:cbx} is implied by the
fact that $\tb$ does not depend on $u$. Equation \eqref{eq:xyz} is
implied by the fact that $a$ does not depend on $u$. The boundary
condition of $x$  is implied by that of $\Delta b$ and $\Delta
b^1$. Thus $(a, b, z)|_{u=0}$ is homotopic to $(a, b, z)|_{u=1}$.
\end{proof}

\section{Application on Integration of (non-strict) Lie 2-algebra morphisms}

Lie's  theorem II tells us that Lie algebra morphisms can integrate
to Lie group morphisms. As pointed out in \cite[Def.
4.2.8]{ScheiberStasheff}  (and also easy to see), an
$L_\infty$-morphism between $L_\infty$-algebras $f:\frkg \to \frkh$
induces a natural map $\int f: \int \frkg \to \int \frkh$ of Kan
complex. Thus applying in the case of Lie 2-algebras, an
$L_\infty$-morphism between Lie 2-algebras  (also called non-strict
Lie 2-algebra morphisms) $f: \frkg \to \frkh$  can integrate to  a
2-group morphism $\tau_2(\int f): \tau_2(\int g) \to \tau_2(\int
h)$. Combining with our result, we have

\begin{cor}\label{thm:main 2} A non-strict Lie 2-algebra morphism $f:
  \frkg\to \frkh$ between two strict Lie 2-algebras integrates to a
  generalized Lie 2-group morphism
\[ (G_0\ltimes G_1 \Rightarrow G_0) \xleftarrow{\sim} \tau_2(\int
\frkg) \xrightarrow{\tau_2(\int f)} \tau_2(\int \frkh)
\xrightarrow{\sim} (H_0\ltimes H_1 \Rightarrow H_0), \]
between the corresponding (simply-connected) Lie group crossed modules. \emptycomment{
 $(\Psi_0, \Psi_1, \Psi_2=id)$:
\begin{equation}\label{main map}\begin{array}{ccc}
P^2(\frkg)&\stackrel{ \Psi_1}{\longrightarrow} &H_0\ltimes H_1\\
\Big\downarrow\Big\downarrow\vcenter{\rlap{ }}&
&\Big\downarrow\Big\downarrow\vcenter{\rlap{
}}\\
P(\frkg)&\stackrel{\Psi_0}{\longrightarrow}  &H_0,
 \end{array}\end{equation}
where $\Psi_0(a(t))=[\mu(a(t))]$, which is the equivalent class of
$\mu(a(t))$ and $\Psi_1$ is given by \eqref{Psi1}. Here the Lie
2-group $\begin{array}{c}
P^2(\frkg)\\
\downarrow\downarrow\vcenter{\rlap{}}\\P(\frkg)
 \end{array}$ is given in Lemma \ref{lem:path2group}.}
\end{cor}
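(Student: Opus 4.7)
The proof is essentially a formal consequence of Theorem \ref{thm:main 3} combined with the functoriality of Getzler--Henriques' integration. The plan is to assemble the displayed span by combining three known/previously-proved pieces: a hypercover on the left, the simplicial map induced by $f$ in the middle, and a hypercover on the right.

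First, I would recall the construction of $\int f$ from $f$. A non-strict Lie 2-algebra (i.e.\ $L_\infty$) morphism $f:\frkg\to\frkh$ is by definition a d.g.c.a.\ morphism $f^*: \wedge^\bullet\frkh \to \wedge^\bullet\frkg$ between the Chevalley--Eilenberg complexes. Since an element of $\frkh_k=\Hom_{d.g.c.a.}(\wedge^\bullet\frkh,\Omega^\bullet(\Delta^k))$ is a d.g.c.a.\ morphism out of $\wedge^\bullet\frkh$, post-composition defines a level-wise map $\int f: \int\frkg \to \int\frkh$, compatible with all face and degeneracy maps of the simplicial Banach manifolds. Smoothness of $\int f$ is immediate from continuity of precomposition with $f^*$. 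Truncating at level $2$ gives a morphism of 2-groups $\tau_2(\int f):\tau_2(\int\frkg)\to\tau_2(\int\frkh)$, which is exactly the cited assertion from \cite[Def.\ 4.2.8]{ScheiberStasheff}.

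Next, I would invoke Theorem \ref{thm:main 3} twice. Applied to $\frkg$ it furnishes a hypercover $\Psi^\frkg:\tau_2(\int\frkg)\xrightarrow{\sim} G_0\ltimes G_1\Rightarrow G_0$; applied to $\frkh$ it furnishes a hypercover $\Psi^\frkh:\tau_2(\int\frkh)\xrightarrow{\sim} H_0\ltimes H_1\Rightarrow H_0$. Both use Proposition \ref{prop:iso} to guarantee that the (a priori set-theoretic) quotients $\tau_2(\int\frkg)$ and $\tau_2(\int\frkh)$ are genuine Lie 2-groups in the Banach category, so that the notion of hypercover from \cite{z:tgpd2} makes sense. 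Concatenating these three arrows produces exactly the displayed chain
\[
(G_0\ltimes G_1\Rightarrow G_0)\xleftarrow{\Psi^\frkg}\tau_2(\int\frkg)\xrightarrow{\tau_2(\int f)}\tau_2(\int\frkh)\xrightarrow{\Psi^\frkh}(H_0\ltimes H_1\Rightarrow H_0),
\]
which, after contracting the right-hand hypercover into $\Psi^\frkh\circ\tau_2(\int f)$, is a generalized morphism in the sense defined just before Theorem \ref{thm:main 3}.

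The only genuinely non-trivial input hiding in this argument is Theorem \ref{thm:main 3} itself (already proved in the excerpt), which upgrades the a priori simplicial set $\tau_2(\int\frkh)$ to a representable Lie 2-group and supplies the Morita equivalence with the strict integration. Once that is in hand, the main obstacle I anticipated---namely, that a non-strict $f$ might fail to produce a well-defined 2-group morphism $\tau_2(\int\frkg)\to\tau_2(\int\frkh)$ because the induced assignment on the path 2-group $P_2\frkg/\!\sim$ does not directly preserve the $\frkg$-homotopy relation---is bypassed precisely by working in the Getzler--Henriques model $\int$, where $\int f$ is manifestly simplicial. Thus no further computation is needed beyond assembling the three arrows, and the corollary follows.
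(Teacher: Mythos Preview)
Your proposal is correct and follows essentially the same approach as the paper: the paper's argument is precisely the paragraph preceding the corollary, namely that $f$ induces $\int f$ (hence $\tau_2(\int f)$) via \cite{ScheiberStasheff}, and one then sandwiches this with the two Morita equivalences from Theorem~\ref{thm:main 3}. Your added remark that the right-hand hypercover should be composed with $\tau_2(\int f)$ to match the formal definition of a generalized morphism as a span $C\xleftarrow{\sim}C''\to C'$ is a useful clarification that the paper leaves implicit.
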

\begin{rmk}
It is fairly easy to integrate a strict morphism which consists of
Lie algebra morphisms $f_i: \frkg_i \to \frkh_i$ preserving all
crossed module structures.   One only needs to integrate $f_i$
individually as a Lie algebra morphism.

The integration of nonstrict morphism is also addressed in the
context of butterflies \cite{noohi:morphism}. Butterflies between
crossed modules are believed\footnote{Private conversation to
Noohi.} to be equivalent to generalized morphisms between strict Lie
2-groups.

Finally, we call the generalized morphism above an
integration of $f$ based on the fact that $\tau_2(\int f)$ should be
considered as  a
canonical integration. However we do not justify the concept of
integration by the inverse procedure, namely differentiation.
\end{rmk}

Now we concentrate on Lie 2-algebra morphisms from a Lie algebra to a
strict Lie 2-algebra. We will see that several interesting objects can
be described by such a morphism, including 2-term representations up to homotopy
of Lie algebras, non-abelian extensions of Lie algebras and up to
homotopy Poisson actions.

We first recall an explicit formulation of $L_\infty$-morphism that we
will mention in the examples:
\begin{defi}
An $L_\infty$-morphism from a Lie algebra $\frkg$ to a strict Lie
2-algebra $L_1\stackrel{\dM}{\longrightarrow} L_0$ consists of
linear maps $\mu:\frkg\longrightarrow L_0$ and
$\nu:\frkg\wedge\frkg\longrightarrow L_1$ such that the obstruction
of $\mu$ being a Lie algebra morphism is given by
\begin{equation}\label{eqn:DGLA morphism c 1}
\mu[X,Y]_\frkg-l_2(\mu(X),\mu(Y))=\dM\nu(X,Y),
\end{equation}
and $\nu$ satisfies the following  condition:
\begin{equation}\label{eqn:DGLA morphism c 2}
l_2(\mu(X),\nu(Y,Z))+c.p.=\nu([X,Y]_\frkg,Z)+c.p.,
\end{equation}
where $c.p.$ means cyclic permutations.
\end{defi}

$\bullet$ {\bf 2-term representations up to homotopy of Lie
algebras}\vspace{2mm}

Associated to any $k$-term complex of vector spaces $\V$, there is a
natural DGLA (differential graded Lie algebra) $\gl(\V)$
\cite{lada-markl,shengzhu2}, which plays the same role as $\gl(V)$
for a vector space $V$ in the classical case. An $L_\infty$-module
\cite{lada-markl} of an $L_\infty$-algebra $L$ is given by an
$L_\infty$-morphism from $L$ to $\gl(\V)$. Associated to any 2-term
complex of vector spaces $\huaV$, by truncation of $\gl(\huaV)$, we
obtain a strict Lie 2-algebra, which we denote by $\End(\huaV)$. The
degree 0 part $\End^0(\huaV)$ is given by
$$
\End^0(\huaV)=\{(A_0,A_1)\in\End(V_0,V_0)\oplus\End(V_1,V_1)|A_0\circ\dM=\dM\circ
A_1\},
$$
and the degree 1 part $\End^1(\huaV)$ is $\Hom(V_0,V_1)$. The Lie
bracket of $\End(\huaV)$ is given by the commutator and the
differential is induced by $\dM$.   It turns out that for 2-term
$L_\infty$-modules of a Lie algebra $\frkg$, it is enough to look at
morphisms to the strict Lie 2-algebra $\End(\huaV)$:

\begin{pro}{\rm\cite{lada-markl}}
A 2-term $L_\infty$-module of a Lie algebra $\frkg$ is given by an
$L_\infty$-morphism from
  $\frkg$ to  $\End(\huaV)$.
\end{pro}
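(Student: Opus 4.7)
The plan is to unpack both sides of the claimed equivalence and match the structural data directly, following \cite{lada-markl}.

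First I would recall that a 2-term $L_\infty$-module of $\frkg$ on $\huaV = (V_1 \xrightarrow{\dM} V_0)$ is a 2-term $L_\infty$-algebra structure on $\frkg \oplus \huaV$ extending the Lie bracket on $\frkg$ and the differential $\dM$ on $\huaV$, with $l_2(\frkg, V_i) \subset V_i$ and $l_3(\wedge^2 \frkg, V_0) \subset V_1$, while $l_3$ vanishes on other arguments. Equivalently, the data of such a module is a triple $(\rho_0, \rho_1, \omega)$ with $\rho_i: \frkg \to \End(V_i)$ linear and $\omega: \wedge^2 \frkg \to \Hom(V_0, V_1)$, satisfying
\begin{itemize}
\item[(a)] compatibility with $\dM$: $\dM \circ \rho_1(X) = \rho_0(X) \circ \dM$;
\item[(b)] a Jacobi-type obstruction: $\rho_i([X,Y]_\frkg) - [\rho_i(X), \rho_i(Y)]$ equals $\dM \circ \omega(X,Y)$ on $V_0$ and $\omega(X,Y) \circ \dM$ on $V_1$;
\item[(c)] a cocycle-type identity: $\rho_1(X)\omega(Y,Z) - \omega([X,Y]_\frkg, Z) + \text{c.p.} = 0$.
\end{itemize}

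Next I would define $\mu$ and $\nu$ from this data by setting $\mu(X) := (\rho_0(X), \rho_1(X)) \in \End^0(\huaV)$, which is well-defined by (a), and $\nu(X,Y) := \omega(X,Y) \in \Hom(V_0, V_1) = \End^1(\huaV)$. The content of (b) is precisely that $\mu([X,Y]_\frkg) - l_2(\mu(X), \mu(Y)) = \dM \nu(X,Y)$ in $\End(\huaV)$, where the differential on $\End(\huaV)$ is induced by $\dM$ and the bracket on degree zero is the commutator — this is \eqref{eqn:DGLA morphism c 1}. The content of (c), after identifying the bracket of $\End^0(\huaV)$ with $\End^1(\huaV)$ through the Lie 2-algebra structure, is exactly the cyclic identity \eqref{eqn:DGLA morphism c 2}. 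Conversely, given $(\mu, \nu)$ satisfying \eqref{eqn:DGLA morphism c 1}–\eqref{eqn:DGLA morphism c 2}, the same formulas read backwards recover $(\rho_0, \rho_1, \omega)$, and condition (a) is automatic from $\mu(X) \in \End^0(\huaV)$.

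The main potential obstacle is bookkeeping: one must check that the truncation of $\gl(\huaV)$ to $\End(\huaV)$ correctly encodes the action of $l_2(\mu(X), -)$ on $\nu$ as $[\rho(X), \omega(Y,Z)]$ with the right signs coming from the graded commutator, and that no higher $l_k$ ($k \ge 4$) terms are needed when the target is strict. Once one observes that $\End(\huaV)$ is precisely the strict truncation of $\gl(\huaV)$ and that a 2-term $L_\infty$-module involves only $l_2$ and $l_3$, the equivalence is essentially tautological and the identities (b), (c) above match \eqref{eqn:DGLA morphism c 1}, \eqref{eqn:DGLA morphism c 2} term by term.
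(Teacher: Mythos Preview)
The paper does not actually give a proof of this proposition: it is stated with the attribution \cite{lada-markl} and no argument follows. Your approach --- unpacking the data of a 2-term $L_\infty$-module as $(\rho_0,\rho_1,\omega)$ and matching it against the pair $(\mu,\nu)$ of an $L_\infty$-morphism into $\End(\huaV)$ --- is precisely the standard verification one would carry out, and is correct in outline.

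One small point worth tightening: in your condition (c) you write only $\rho_1(X)\omega(Y,Z)$, whereas the bracket $l_2(\mu(X),\nu(Y,Z))$ in $\End(\huaV)$ is the full graded commutator $\rho_1(X)\circ\omega(Y,Z)-\omega(Y,Z)\circ\rho_0(X)$. You flag this yourself in the bookkeeping paragraph, so it is not a gap in understanding, but when you write out the cyclic identity (c) it should read
\[
\big(\rho_1(X)\omega(Y,Z)-\omega(Y,Z)\rho_0(X)\big)+c.p.=\omega([X,Y]_\frkg,Z)+c.p.,
\]
which is exactly \eqref{eqn:DGLA morphism c 2} specialized to $\End(\huaV)$. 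With that correction the matching is indeed term-by-term and tautological, as you say.
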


A 2-term $L_\infty$-module of a Lie algebra $\frkg$ is the same as a
representation up to homotopy of the Lie algebra $\frkg$ on a 2-term
complex of vector spaces, see \cite{Camilo rep upto
homotopy,shengzhu1} for more details. Thus Corollary \ref{thm:main
2} can be applied to integrate $L_\infty$-modules $\huaV$ of a Lie
algebra $\frkg$ to that of a Lie group $G$. This is studied further
in \cite{shengzhu2}, where the semidirect product $\frkg \ltimes
\huaV$ is also integrated. It then has application in integrating
omni-Lie algebras and Courant algebroids \cite{shengzhu3}.

\vspace{3mm}

$\bullet $ {\bf Non-abelian extensions of Lie algebras}\vspace{3mm}

It is well known that abelian extensions of a Lie algebra $\frkg$ give rise to a
representation of $\frkg$ and the equivalence classes of extensions are in
one-to-one correspondence with the second cohomology. In the
following, we will see that a non-abelian  extension of a Lie
algebra $\frkg$,  given by a short exact
sequence of Lie algebras
\begin{equation}\label{extension }
 0\stackrel{}{\longrightarrow}\frkk\stackrel{i}{\longrightarrow}
\hat{\frkg}\stackrel{p}{\longrightarrow}
\frkg\stackrel{}{\longrightarrow}0
\end{equation} can be
realized as an $L_\infty$-morphism from Lie algebra $\frkg$ to the
strict Lie 2-algebra
$\frkk\stackrel{\ad}{\longrightarrow}\Der(\frkk)$ (see Example
\ref{ep:derivation}).

By choosing a splitting of $p$,  we can always
assume that $\hat{\frkg}=\frkg\oplus \frkk$ as vector spaces. Then the
Lie bracket $[\cdot,\cdot]_{\hat{\frkg}}$ decomposes as below,
$$
[X_1+k_1,X_2+k_2]_{\hat{\frkg}}=[X_1,X_2]_{\hat{\frkg}}+[X_1,k_2]_{\hat{\frkg}}-[X_2,k_1]_{\hat{\frkg}}+[k_1,k_2]_\frkk,\quad\forall
~X_1+k_1,X_2+k_2\in\frkg\oplus \frkk.
$$
Since $p$ is a morphism of Lie algebras, there is a linear map
$\nu:\frkg\wedge\frkg\longrightarrow\frkk$ such that
$$[X_1,X_2]_{\hat{\frkg}}=[X_1,X_2]_{{\frkg}}+\nu(X_1,X_2).$$
On the other hand, it is straightforward to see that for any
$X\in\frkg$, the action
$[X,\cdot]_{\hat{\frkg}}:\frkk\longrightarrow\frkk$ is a derivation
with respect to the Lie bracket $[\cdot,\cdot]_\frkk$. Thus
$[X,k]_{\hat{\frkg}}=\mu(X)(k)$ for some linear map
$\mu:\frkg\longrightarrow \Der(\frkk)$. One should be very careful
here: $\mu$ is not a Lie algebra morphism!

We rewrite $[\cdot,\cdot]_{\hat{\frkg}}$ as
\begin{equation}\label{eq:bracketext}
[X_1+k_1,X_2+k_2]_{\hat{\frkg}}=[X_1,X_2]_{{\frkg}}+\mu(X_1)(k_2)-\mu(X_2)(k_1)+[k_1,k_2]_\frkk+\nu(X_1,X_2).
\end{equation}
The Jacobi identity of $[\cdot,\cdot]_{\hat{\frkg}}$ gives,
\begin{eqnarray}
\mu([X,Y]_\frkg)(k)-[\mu(X),\mu(Y)]_C(k)&=&[\nu(X,Y),k]_\frkk,\\
~\mu(X)(\nu(Y,Z))+c.p.&=&\nu([X,Y]_\frkg,Z)+c.p..
\end{eqnarray} Moreover, a different splitting $p'$ will give a
homotopic morphism.
We conclude by the following proposition, which we do not claim any
originality (see
\cite{AMR,Nonabeliancohomology, bisheng,stevenson} for various similar
and more general discussions).
\begin{pro}
Given two Lie algebras $\frkg$ and $\frkk$, there is a one-to-one
correspondence between the equivalence classes of non-abelian
extensions of $\frkg$ by $\frkk$ and homotopy classes of
$L_\infty$-morphisms from $\frkg$ to the strict Lie 2-algebra
$\frkk\stackrel{\ad}{\longrightarrow}\Der(\frkk)$.
\end{pro}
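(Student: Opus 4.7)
The plan is to set up a bijection at the cochain level between splittings of non-abelian extensions on one side and $L_\infty$-morphism data $(\mu,\nu)$ on the other, and then show the bijection descends to the quotient by the appropriate equivalences.

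First, for the \emph{forward direction} I would formalize the construction already sketched in the text: starting from an exact sequence $0\to\frkk\to\hat\frkg\to\frkg\to 0$ and a linear splitting $\sigma:\frkg\to\hat\frkg$, set $\mu(X)(k):=[\sigma(X),k]_{\hat\frkg}$ and $\nu(X,Y):=[\sigma(X),\sigma(Y)]_{\hat\frkg}-\sigma[X,Y]_\frkg$. Since $\ker p=\frkk$, both maps land where they should, and $\mu(X)$ is automatically a derivation of $\frkk$. Expanding the Jacobi identity for $[\cdot,\cdot]_{\hat\frkg}$ on $(\sigma X,\sigma Y,k)$ and $(\sigma X,\sigma Y,\sigma Z)$ produces exactly equations \eqref{eqn:DGLA morphism c 1} and \eqref{eqn:DGLA morphism c 2} for the crossed module $\frkk\xrightarrow{\ad}\Der(\frkk)$, so $(\mu,\nu)$ is a bona fide $L_\infty$-morphism.

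For the \emph{backward direction}, given $(\mu,\nu)$ I would define $\hat\frkg:=\frkg\oplus\frkk$ with bracket \eqref{eq:bracketext}. Skew-symmetry is clear; the Jacobi identity splits, by homogeneity in the number of $\frkk$-components, into four pieces. The pure $\frkk$ piece is Jacobi for $\frkk$; the one-$\frkg$ piece is the derivation property of $\mu(X)$; the two-$\frkg$ piece on a $\frkk$-argument is exactly \eqref{eqn:DGLA morphism c 1} composed with $\ad$, applied to $k$; and the three-$\frkg$ piece, projected to $\frkk$, is exactly \eqref{eqn:DGLA morphism c 2}. The inclusion $\frkk\hookrightarrow\hat\frkg$ and projection $\hat\frkg\to\frkg$ give an extension. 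The two constructions are mutually inverse at the level of (extension, splitting) pairs versus morphisms.

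The main work is the descent to equivalence classes, and this is where the notion of homotopy of $L_\infty$-morphisms matters. Two splittings $\sigma,\sigma'$ of the same extension differ by a linear $\lambda:\frkg\to\frkk$ with $\sigma'=\sigma+i\circ\lambda$. A direct calculation shows the resulting data transform as
\[
\mu'(X)=\mu(X)+\ad_{\lambda(X)},\qquad \nu'(X,Y)=\nu(X,Y)+\mu(X)\lambda(Y)-\mu(Y)\lambda(X)-\lambda[X,Y]_\frkg+[\lambda(X),\lambda(Y)]_\frkk,
\]
which is precisely the standard $L_\infty$-homotopy relation generated by $\lambda$ (the gauge transformation for morphisms to a 2-term $L_\infty$-algebra; see e.g.\ \cite{lada-markl}). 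An equivalence of extensions $\Theta:\hat\frkg\to\hat\frkg'$ that is the identity on $\frkk$ and $\frkg$ is determined (once splittings $\sigma,\sigma'$ are chosen on either side) by a $\lambda$ with $\Theta(\sigma(X))=\sigma'(X)+\lambda(X)$, so the induced $L_\infty$-morphisms are related by the same gauge transformation. Conversely a homotopy $\lambda$ between $(\mu,\nu)$ and $(\mu',\nu')$ furnishes an isomorphism of the two reconstructed extensions by the same formula. The hardest step is verifying this transformation formula for $\nu$ carefully and checking that it matches the intrinsic definition of $L_\infty$-homotopy in the target crossed module; once that bookkeeping is settled, both passages descend to well-defined mutually inverse maps on equivalence classes, completing the proof.
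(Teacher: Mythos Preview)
Your proposal is correct and follows essentially the same approach as the paper: the paper only sketches the forward construction (extension plus splitting $\mapsto (\mu,\nu)$) in the text preceding the proposition, notes without proof that ``a different splitting $p'$ will give a homotopic morphism,'' and then states the result with references to \cite{AMR,Nonabeliancohomology,bisheng,stevenson} rather than giving a self-contained argument. Your write-up fills in precisely the details the paper omits---the inverse construction via \eqref{eq:bracketext}, the explicit gauge transformation formulas under change of splitting, and the identification of extension equivalence with $L_\infty$-homotopy---so it is more complete than the paper's own treatment while using the same underlying strategy.
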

Thus apply our result on integration may provide another method to integrate
non-abelian extensions of Lie algebras.


$\bullet ${ \bf Up to homotopy Poisson actions }
\begin{defi}{\rm\cite{Severa}}
An up  to homotopy Poisson action of a Lie algebra $\frkg$ on a
Poisson manifold $(M,\pi)$ is an extension $\frkg_M$ of $\frkg$ by
the Lie algebra $C^\infty(M)$ (with the Poisson bracket
$\{\cdot,\cdot\}_\pi$ used as the Lie bracket), such that for every
$X\in \frkg_M$, the map $C^\infty(M)\longrightarrow
C^\infty(M),~f\longmapsto [X,f]$ is a derivation (i.e. a vector
field).
\end{defi}

Let  $L_M^\pi$ denote the DGLA of multi-vector fields $\Gamma(\wedge
TM)[1]$, with Schouten bracket $[\cdot,\cdot]_S$ and
differential $[\pi,\cdot]$.

As stated in \cite{Severa}, another equivalent formulation of up to
homotopy Poisson action is an $L_\infty$ morphism from $\frkg$ to
the DGLA $L_M^\pi$. We further simplify this  statement. Denote by
$\frkX(M)^\pi$ the set of vector fields preserving the Poisson structure
$\pi$, i.e.
$$
\frkX(M)^\pi=\{X\in\frkX(M),\quad [X,\pi]=0\}.
$$
By truncation, we obtain a strict Lie 2-algebra
$C^\infty(M)\stackrel{[\pi,\cdot]}{\longrightarrow}\frkX(M)^\pi$, of
which the degree $1$ part is $C^\infty(M)$, the degree $0$ part is
$\frkX(M)^\pi$ and the differential is $[\pi,\cdot]$. The extension
 $\frkg_M$ of $\frkg$ by $C^\infty(M)$ is totally determined by a
 linear map $\mu:\frkg\longrightarrow \frkX(M)^\pi$ and a linear map
 $\nu:\frkg\wedge\frkg\longrightarrow C^\infty(M)$, which satisfy
 the  following equation
 \begin{eqnarray*}
\mu([X,Y]_\frkg)-[\mu(X),\mu(Y)]_S&=&[\pi,\nu(X,Y)]_S,\\
\mu(X)(\nu(Y,Z))+c.p.&=&\nu([X,Y]_\frkg,Z)+c.p..
 \end{eqnarray*}
Thus, we have
\begin{pro}
There is a one-to-one correspondence between up to homotopy Poisson
actions of Lie algebra $\frkg$ on Poisson manifolds $(M,\pi)$ and
$L_\infty$-morphisms $(\mu,\nu)$ from $\frkg$ to the strict Lie
2-algebra
$C^\infty(M)\stackrel{[\pi,\cdot]}{\longrightarrow}\frkX(M)^\pi$.
\end{pro}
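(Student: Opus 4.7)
The plan is to deduce this proposition essentially as a corollary of the preceding non-abelian extension proposition, once we carefully track the extra structure imposed by the Poisson condition. The key point is that an up-to-homotopy Poisson action is by definition a non-abelian extension
\[ 0 \to C^\infty(M) \to \frkg_M \to \frkg \to 0 \]
of $\frkg$ by $\frkk := C^\infty(M)$ (equipped with $\{\cdot,\cdot\}_\pi$ as Lie bracket), subject to the additional derivation constraint that each $[X,\cdot]: C^\infty(M) \to C^\infty(M)$ for $X \in \frkg_M$ is actually a vector field rather than merely an abstract Lie algebra derivation. So the proof splits neatly in two directions.

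For the direction from Poisson actions to $L_\infty$-morphisms, I would first invoke the previous proposition on non-abelian extensions to obtain a linear splitting $\frkg_M \cong \frkg \oplus C^\infty(M)$ and produce $\mu: \frkg \to \Der(C^\infty(M),\{\cdot,\cdot\}_\pi)$ together with $\nu: \frkg \wedge \frkg \to C^\infty(M)$ satisfying the two structure equations \eqref{eqn:DGLA morphism c 1}--\eqref{eqn:DGLA morphism c 2}. The extra Poisson-action hypothesis says $\mu(X)$ is a vector field for every $X$, so $\mu(X) \in \frkX(M)$; because $\mu(X)$ must moreover be a derivation of the Poisson bracket $\{\cdot,\cdot\}_\pi$ on $C^\infty(M)$, the standard identity $[X,\pi]_S = 0$ follows, placing $\mu(X)$ in $\frkX(M)^\pi$. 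Then the two cocycle-type identities from the non-abelian extension setting rewrite verbatim (using $[\pi,\cdot]_S$ in place of the abstract differential $\dM$, and the Schouten bracket on $\frkX(M)^\pi$ in place of $[\cdot,\cdot]_C$) as precisely the conditions for $(\mu,\nu)$ to be an $L_\infty$-morphism into the strict Lie $2$-algebra $C^\infty(M) \xrightarrow{[\pi,\cdot]} \frkX(M)^\pi$.

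Conversely, given $(\mu,\nu)$ with values in $C^\infty(M) \xrightarrow{[\pi,\cdot]} \frkX(M)^\pi$, I would define a bracket on $\frkg \oplus C^\infty(M)$ by the formula
\[ [X_1+f_1, X_2+f_2] = [X_1,X_2]_\frkg + \mu(X_1)(f_2) - \mu(X_2)(f_1) + \{f_1,f_2\}_\pi + \nu(X_1,X_2), \]
in direct analogy with \eqref{eq:bracketext}. The two $L_\infty$-morphism equations translate, again verbatim, into the Jacobi identity of this bracket; the fact that $\mu$ already takes values in the Poisson-preserving vector fields guarantees that each $[X,\cdot]$ is a vector field, verifying the Poisson-action condition. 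Exactness is immediate from the construction, so this gives the inverse assignment.

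The least mechanical step — the one I expect to be the genuine content — is verifying that the derivation/Poisson-preservation condition really is equivalent to $\mu$ landing in $\frkX(M)^\pi \subset \Der(C^\infty(M),\{\cdot,\cdot\}_\pi)$, rather than in some larger space of Lie-algebra derivations of $(C^\infty(M),\{\cdot,\cdot\}_\pi)$. The remaining work is only to check that equivalent extensions (via different splittings) produce homotopic $L_\infty$-morphisms, which follows exactly as in the non-abelian extension case since a change of splitting by a linear map $\frkg \to C^\infty(M)$ induces the standard coboundary-type homotopy at the level of $(\mu,\nu)$.
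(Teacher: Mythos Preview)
Your proposal is correct and follows essentially the same approach as the paper: choose a splitting of the extension $\frkg_M$, extract $(\mu,\nu)$ via the bracket formula \eqref{eq:bracketext}, and identify the Jacobi identity plus the derivation (vector-field) constraint with the two $L_\infty$-morphism equations for the target $C^\infty(M)\xrightarrow{[\pi,\cdot]}\frkX(M)^\pi$. The paper's argument is much terser---it essentially just records the resulting equations---but your more detailed treatment of both directions and of the key point that a vector field which is a derivation of $\{\cdot,\cdot\}_\pi$ lies in $\frkX(M)^\pi$ is exactly what underlies the paper's claim.
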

\begin{rmk} We only need to use the fact $\pi_2(H_0)=0$ in the
  construction of $g$ in the last section. Without this condition, we
  will still have a morphism even though not a Morita morphism. The space of $\frkX(M)^\pi$
  is infinite dimensional and does not admit a Banach
  structure. However, there
  is also certain infinite-dimensional calculus available in this
  case (see for example \cite[App.A]{wz:int}). Thus our result can not
 be applied directly, however certain modification may be applied.
\end{rmk}

\end{document}